\newtheorem{theorem}{Theorem}
\newcommand{\aigc}[1]{\textsc{AIGC}}
\newcommand{\AIGC}[1]{\textsc{AIGC}}
\newcommand{\sysname}[1]{OnePiece}
\newcommand{\NM}[1]{\textsc{NM}}
\newenvironment{myprotocol}{
    \hrule
    \smallskip
    \footnotesize
    \algsetup{linenosize=\tiny}
    \begin{algorithmic}[1]

        \makeatletter
        \newcommand{\EVENT}[1]{\STATE \textbf{event} ##1 \textbf{do}\begin{ALC@g}}
                \newcommand{\ENDEVENT}{\end{ALC@g}}
        \makeatother

        \makeatletter
        \newcommand{\FUNC}[2]{\STATE \textbf{function} \textbf{##1} (##2) \begin{ALC@g}}
                \newcommand{\ENDFUNC}{\end{ALC@g}}
        \makeatother

        }{
    \end{algorithmic}
    \smallskip
    \hrule
}
\begin{document}

\title{\sysname{}: A Large-Scale Distributed Inference System with RDMA for Complex AI-Generated Content (AIGC) Workflows}

\author{June Chen, Neal Xu, Gragas Huang, Bok Zhou, Stephen Liu}
\affiliation{\institution{Wechat Tencent} \city{} \country{}}



\keywords{AIGC, Microservice, RDMA, Distributed System}

\begin{abstract}
   The rapid growth of AI-generated content (AIGC) has enabled high-quality creative production across diverse domains, 
yet existing systems face critical inefficiencies in throughput, resource utilization, and scalability under concurrent workloads. 
This paper introduces \sysname{}, a large-scale distributed inference system with RDMA optimized for multi-stage AIGC workflows. 
By decomposing pipelines into fine-grained microservices and leveraging one-sided RDMA communication, 
\sysname{} significantly reduces inter-node latency and CPU overhead while improving GPU utilization. 
The system incorporates a novel double-ring buffer design to resolve deadlocks in RDMA-aware memory access without CPU involvement. 
Additionally, a dynamic Node Manager allocates resources elastically across workflow stages in response to real-time load. 
Experimental results demonstrate that \sysname{} reduces GPU resource consumption by $16\times$ in Wan2.1 image-to-video generation 
compared to monolithic inference pipelines, offering a scalable, fault-tolerant, and efficient solution for production AIGC environments.
\end{abstract}

\maketitle

\vspace{-2mm}
\section{Introduction}

The advent of AI-generated content (\AIGC{}) has dramatically lowered barriers 
to creative expression and content production, 
empowering individuals and organizations to produce high-quality materials 
with unprecedented speed and scale. 
\AIGC{} is being applied across diverse fields such as automated journalism, 
personalized advertising, video game design, 
and virtual reality content creation. 
Central to this progress are diffusion models, 
whose remarkable generative capabilities are increasingly surpassing those of GANs and auto-regressive 
Transformers. 
These models excel not only in image generation but are also driving innovation 
in video-related research and production. 
Video diffusion models
support a wide range of applications—including video synthesis, 
editing, and enhanced video understanding. 
By providing non-experts with advanced generative tools, 
these technologies are helping democratize content creation. 
As a result, \AIGC{} is fundamentally transforming traditional workflows within creative industries.

\AIGC{} workflows typically integrate a variety of heterogeneous models,
such as Variational Autoencoders (VAEs)~\cite{dai2019diagnosing,doersch2016tutorial}, 
diffusion models~\cite{croitoru2023diffusion,wijmans1995solution,yang2023diffusion}, 
and text encoders~\cite{ni2021sentence},each fulfilling a distinct role 
within the content generation pipeline. 
In practice, specialized models like LoRA~\cite{hu2022lora,8474715} are often incorporated to further enhance output quality, 
leading to increasingly complex and multi-stage processing architectures. 
This multi-model design, however, often introduces significant latency 
and high GPU resource demands. 
For example, the WAN2.1 model~\cite{wan2.1} requires around 32GB of GPU memory distributed across 8 GPUs.

Despite significant advances in reducing per-task latency, 
a critical challenge persists in real-world \AIGC{} deployments: 
a disproportionate emphasis on optimization targets. 
Current research and industrial practices remain largely centered 
on minimizing latency at the request level, 
frequently at the expense of system-wide throughput and GPU utilization efficiency. 
This narrow focus leads to suboptimal GPU occupancy, 
constrained service capacity, and poor scalability under highly concurrent conditions.

The identified inefficiencies arise from several interrelated factors:
\begin{itemize}
    \item First, the multi-stage architecture of \AIGC{} pipelines 
    introduces diverse computational and memory requirements that differ substantially across stages. 
    Conventional monolithic system architectures often lack the granularity needed for fine-grained 
    resource allocation tailored to these varying demands.
    \item Second, the dynamic and often unpredictable nature of request patterns leads to significant 
    load fluctuations, which undermines the effectiveness of static resource provisioning strategies.
    \item Lastly, inefficient memory management exacerbates these challenges, 
    as intermediate results and cached data consume excessive VRAM without corresponding gains 
    in operational efficiency.
\end{itemize}

Microservices-based architectures offer a promising solution 
to these challenges by enabling more elastic and efficient resource utilization. 
For instance, Ant Group’s deployment of the NVIDIA Triton Inference Server 
for managing multi-model pipelines achieved a 2.4× improvement in throughput, 
a 20\% reduction in latency, and a 50\% decrease in operational costs~\cite{antgroup}.

The principal benefits of a microservices architecture include:
\begin{itemize}
    \item Elastic Resource Allocation: 
    Microservices can be deployed on hardware tailored to the specific computational 
    requirements of each component. 
    For example, compute-intensive tasks such as text encoding can utilize high-frequency GPUs, 
    while memory-intensive operations like frame synthesis benefit from GPUs with larger VRAM capacities.
    \item Independent Scalability: Each service can be scaled independently according 
    to real-time workload fluctuations. 
    This fine-grained scalability allows precise resource allocation, 
    reduces the need for over-provisioning, and lowers operational costs compared to monolithic systems.
    \item Fault Isolation and Enhanced Stability: The decoupled nature of microservices localizes failures 
    within individual components, preventing cascading system failures and improving overall reliability.
    \item Technological Heterogeneity: Each microservice can employ specialized frameworks, 
    programming languages, or optimizations best suited to its task, promoting flexibility 
    and performance gains.
\end{itemize}

However, disaggregating pipelines into discrete stages often introduces significant latency 
overheads due to the substantial volume of data transferred between nodes. 
Some existing systems that disaggregate transformer models,such as Mooncake~\cite{qin2024mooncake},
leverage NCCL~\cite{nccl} for message passing to mitigate this latency. 
Nevertheless, NCCL is primarily suitable for tensor-based messages 
with fixed payload sizes and cannot efficiently accommodate data with dynamic or variable dimensions.


In this paper, we present \sysname{}, 
a novel, large-scale distributed inference system designed 
for microservices-oriented deployment using RDMA and tailored to complex \AIGC{} workflows. 
By strategically decomposing end-to-end pipelines into fine-grained microservices, 
\sysname{} enables more flexible resource allocation and significantly improves GPU utilization. 
Additionally, we propose a systematic methodology for analyzing and identifying optimal 
pipeline partitioning strategies, 
thereby enhancing overall system performance and resource efficiency.

\sysname{} improves resource efficiency by sharing common pipeline components 
across multiple applications,
such as text-to-video (T2V) and image-to-video (I2V) generation will go though 
the same VAE Encoder and Decoder services.
To handle dynamic and unpredictable request loads, 
\sysname{} integrates a Node Manager (\NM{}) that dynamically redistributes GPU resources 
across microservices. 
This enables scalable capacity expansion during peak demand and contraction 
during low-traffic periods, ensuring high overall resource utilization.

To address the high data transfer latency associated with traditional TCP-based sockets 
in large-volume data scenarios, 
\sysname{} adopts one-sided RDMA for inter-service communication~\cite{jiang2004high, bedeir2010building, stuedi2012wimpy}. 
This approach supports direct memory-to-memory data transfer, 
substantially decreasing CPU overhead. 
However, due to the regional constraints inherent in RDMA connections, 
\sysname{} organizes services into regionally autonomous sets, 
each of which can execute complete workflows independently. 
Incoming requests are distributed randomly across these sets. 
Beyond improving resource utilization, this multi-set design also increases fault tolerance 
by isolating failures within specific regional sets.

As multiple instances deliver messages to a shared memory region within the same receiving instance via RDMA, 
\sysname{} employs a ring buffer structure that supports dynamically sized messages. 
However, deadlocks may occur if an instance fails during memory access. 
To the best of our knowledge, existing deadlock resolution mechanisms—such as those in Redis~\cite{carlson2013redis},rely on CPU involvement. 
To address this under RDMA constraints, 
\sysname{} introduces a novel double-ring buffer mechanism that resolves deadlocks without CPU intervention.

Our system delivers high throughput and low latency while maintaining high-quality content generation, 
providing a scalable, reliable, and cost-effective platform for real-world AIGC services.
In summary, this paper makes the following contributions:
\begin{itemize}
    \item We present the detailed design of \sysname{}, a large-scale distributed system 
    for orchestrating microservices-based \AIGC{} pipelines.
    \item We introduce a one-sided RDMA network architecture to enable high-performance 
    inter-service communication to reduce the latency from transferring high volumn data.
    \item We propose a deadlock-free multi-producer ring buffer data structure to facilitate 
    efficient message passing in the RDMA network with double ring buffers.
    \item We demonstrate a $16\times$ reduction in GPU resource usage for Wan2.1 
    image-to-video generation compared to running the pipeline within single instances.
\end{itemize}

\section{Background}

In this section, we introduce RDMA and NCCL, and then we distignish their difference.
\vspace{-2mm}
\subsection{RDMA (Remote Direct Memory Access)}
\label{ss:pre_rdma}
Traditional network communication typically involves multiple data copy operations: 
from application buffers to kernel space, 
and then to network interface buffers—all orchestrated by the CPU. 
This process introduces considerable latency and CPU overhead, particularly in distributed computing environments 
where fine-grained or frequent data exchanges are common.

Remote Direct Memory Access (RDMA)~\cite{jin2001high} enables a machine to directly access the memory of a remote machine without involving the remote CPU. 
By bypassing the kernel network stack, RDMA provides a low-latency, high-throughput communication mechanism. 
Communication via RDMA is facilitated through queue pairs, each consisting of a send queue and a receive queue. 
RDMA supports two primary operation types: two-sided and one-sided operations~\cite{kalia2016design}. 
Two-sided operations, such as Send and Receive, require participation from both the sender's and receiver's CPUs. 
In contrast, one-sided operations do not involve the remote CPU; 
the sender directly specifies the remote memory address for data placement. 
As a result, one-sided operations achieve lower latency and higher throughput compared to two-sided operations [23].


\subsection{NCCL (NVIDIA Collective Communications Library)}
\label{ss:pre_nccl}
NCCL (NVIDIA Collective Communications Library)~\cite{nccl} is a highly optimized communication library developed by NVIDIA, 
specifically designed to accelerate collective communication across multiple GPUs and nodes in deep learning workloads. 
It offers efficient primitives for data exchange and synchronization among GPUs, 
which are essential for distributed training and inference of large-scale models.

In distributed deep learning, GPUs collaborate by sharing intermediate data such as gradients and model parameters. 
To maintain high performance, it is critical to minimize communication overhead and latency through fast collective operations—especially 
as the number of GPUs scales. 
Conventional CPU-based communication methods often introduce bottlenecks due to limited PCIe bandwidth and excessive CPU involvement.

NCCL overcomes these limitations by implementing communication primitives entirely on the GPU 
and leveraging high-speed interconnects including NVLink, PCIe, and InfiniBand. 
It is engineered to maximize bandwidth efficiency and reduce latency, thereby supporting scalable distributed training.

As a foundational tool in modern AI infrastructure, NCCL significantly alleviates communication bottlenecks during gradient aggregation and synchronization. 
This enables near-linear scaling of training performance across numerous GPUs, reducing training time for increasingly large models and datasets.

\vspace{-2mm}
\subsection{Difference between RDMA and NCCL}
RDMA enables direct memory-to-memory data transfer across networked nodes without involving the CPU or operating system, 
thereby providing low-latency, high-bandwidth communication with minimal overhead. 
Its primary objective is to reduce latency and alleviate CPU and GPU resource consumption during remote data access.

In contrast, NCCL is optimized for high-performance GPU-to-GPU communication, both within and across nodes, 
leveraging hardware interconnects such as NVLink and InfiniBand. 
While NCCL can utilize RDMA as its underlying protocol for inter-node communication, 
it specifically offers tailored collective operations,such as All-Reduce and Broadcast,
that are designed for tensor data commonly used in deep learning workflows.

In \sysname{}, messages of arbitrary types must be transmitted between nodes without engaging GPU resources, 
as these are reserved for model computation tasks. 
Therefore, rather than relying on NCCL,which is oriented toward GPU centric tensor communication,
\sysname{} employs RDMA directly as its communication protocol. 
This approach ensures efficient, flexible, and non-intrusive data transfer while preserving valuable GPU capacity for core generative tasks.

\subsection{The AIGC Workflow}
\label{ss:wan}

In Artificial Intelligence Generated Content (\AIGC{}), a workflow denotes a structured, 
multi-stage pipeline that converts a user input,such as a text prompt or an image,into a refined output like video, image, or audio. 
Rather than a single step, this process typically involves a sequence of specialized AI models, 
each performing a distinct subtask. 
Common stages include input comprehension, iterative generation in a latent (compressed) space, 
and final decoding into a human-interpretable format.

To illustrate a typical \AIGC{} workflow, we consider the image-to-video generation process implemented by WAN~\cite{wan2.1}. 
This pipeline exemplifies the multi-stage architecture characteristic of advanced \AIGC{} systems, 
where each stage employs one or more dedicated models. 
The workflow comprises the following components:
\begin{itemize}
    \item T5 \& CLIP (Text Understanding and Conditioning): These models interpret and contextualize the text prompt, 
    converting it into a numerical representation that guides subsequent video generation.
    \item VAE Encode (Image Compression to Latent Space): This module compresses a high-dimensional input image into a compact latent representation, 
    optimizing computational efficiency.
    \item Diffusion (Iterative Video Generation in Latent Space): Serving as the core generative component, 
    this model utilizes the latent representation from the VAE Encoder and conditioned text embeddings to iteratively synthesize video frames.
    \item VAE Decode (Rendering the Final Video): This stage decodes the generated latent sequence back into a high-resolution pixel-domain video suitable 
    for viewing.
\end{itemize}

This modular workflow provides significant advantages in terms of efficiency and controllability, 
facilitating the production of sophisticated, high-fidelity video content from minimal input.

\section{System Design Overview}
\begin{figure}[t]
    \includegraphics[width=0.35\textwidth]{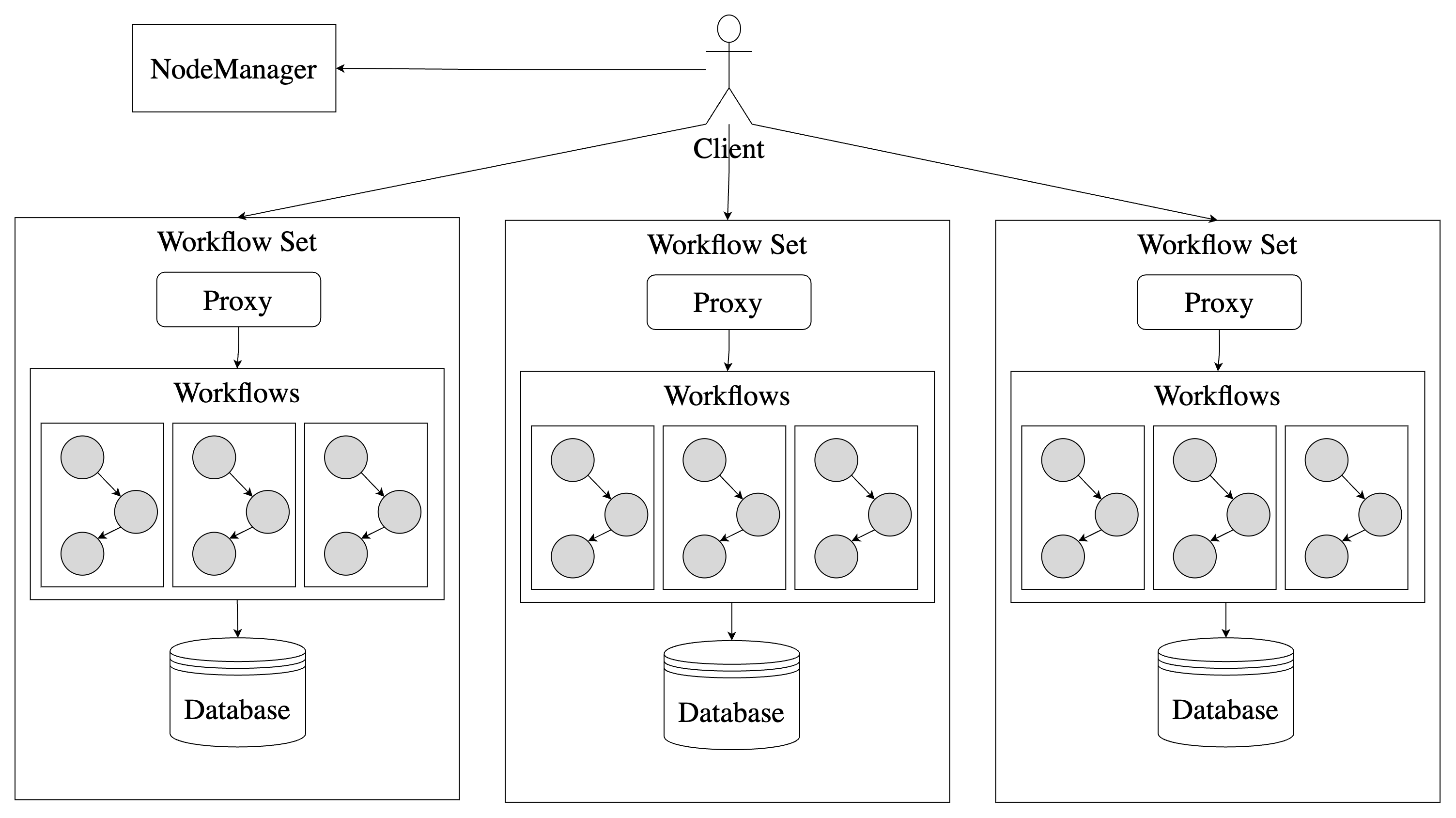}
    \caption{The system architecture overview of \sysname{}}
    \label{fig:overview}
    \vspace{-6mm}
\end{figure}

\sysname{} is a large-scale distributed inference system designed 
for \AIGC{} applications. 
It assists users in generating creative content to express aspects of their daily lives. 
The system organizes nodes (machines) into workflow sets, 
each deployable within a local RDMA-enabled network to leverage high-speed
and low-latency communication communication. 
Within each set, nodes are assigned one of three specialized roles: 
proxy, workflow, or database. 

Proxy nodes interface with client requests, workflow nodes execute the multi-stage 
\AIGC{} generation pipeline, 
and database nodes are responsible for persistent storage of generated results.

A centralized NodeManager (\NM{}) service maintains real-time information 
on the roles and network locations of all nodes. 
Clients first query the \NM{} to discover available proxy nodes, 
through which they submit generation requests. 
As the generation process can be time-intensive, 
clients periodically poll the system to retrieve results upon completion. 
Figure~\ref{fig:overview} illustrates the overall architecture of the system.

\subsection{Workflow Set (WS)}
\label{ss:ws}
\begin{figure*}[t]
    \includegraphics[width=0.7\textwidth]{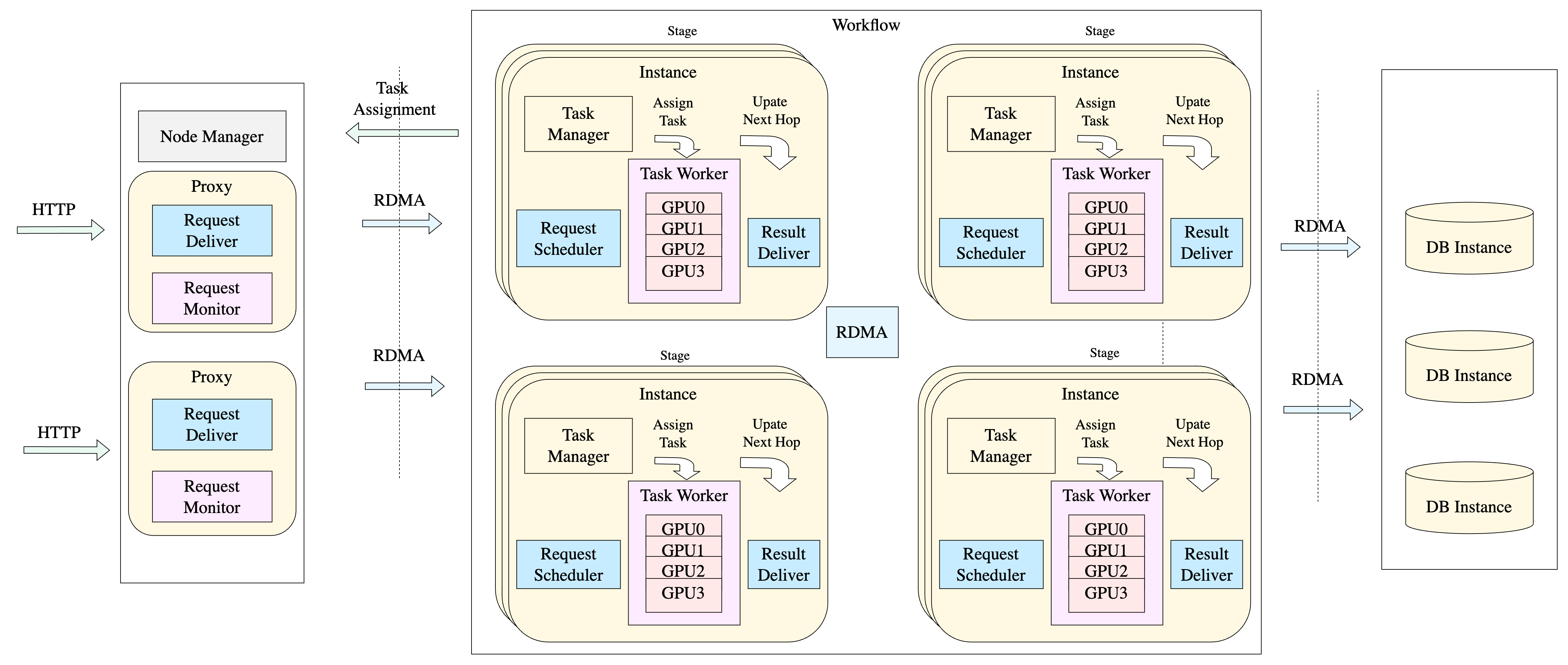}
    \caption{The Workflow Set in \sysname{}}
    \label{fig:workflow}
    \vspace{-4mm}
\end{figure*}

Each Workflow Set ($WS$) in \sysname{} comprises machines located within the same geographic region, 
interconnected via a dedicated RDMA network. 
As illustrated in Figure~\ref{fig:workflow}, a $WS$ consists of three core components: 
proxies, workflow instances, and databases. 
All communication between these components is facilitated through the high-performance RDMA network.

\subsection{Proxy}
\label{ss:proxy}

Within each $WS$, machines are interconnected via an RDMA network 
and require specialized hardware such as InfiniBand~\cite{liu2003high, islam2012high}. 
To facilitate external client access, proxy instances are deployed as lightweight, CPU-only servers. 
These proxies act as the entry point for generation requests: 
upon receiving a task, each proxy assigns a unique identifier (UID) 
that tracks the request throughout its entire lifecycle. 
This UID is propagated at every processing stage, enabling clients to retrieve results later 
by supplying the same identifier.

Each proxy also integrates a request monitor to regulate system load. 
It proactively rejects new incoming requests to avoid overloading. 
Clients that receive a rejection then attempt to submit their request to a different RDMA-enabled set. 
This monitoring mechanism, coupled with the fast-reject capability, 
enhances cross-set load balancing. 
Further design and operational details of the request monitoring system are elaborated 
in Section~\ref{ss:fast_reject}.


\subsection{Workflow Instances}
Upon acceptance by a proxy, 
each client request is processed through a user-defined workflow. 
Such a workflow consists of a sequence of stages,each defined by the user except for the first and last one,
where every stage contains multiple instances that handle requests in parallel.

The first stage, designated as the entrance stage, 
serves as the initial processing unit. 
It prepares and initiates the generation pipeline using the requests forwarded from the proxies. 
Once a request has been processed through all stages of the workflow, 
the final output is transferred to the database layer for persistent storage. 
This guarantees that all generated content is durably stored and remains retrievable by clients 
using the assigned UID.

Further details regarding workflow composition, 
instance varieties, 
scheduling strategies, 
and fault-tolerance mechanisms will be discussed in later sections.

\subsection{Databases}
A key design aspect of the \sysname{} database is its memory-centric storage strategy. 
Since most generated results are short-lived and typically accessed only once by clients, 
the system avoids writing data to disk by default. 
Instead, results are retained in distributed memory,such as RAM or NVMe-based buffers,
significantly accelerating both write and read performance. 
This approach reduces I/O overhead and minimizes latency during result submission and client retrieval.

To ensure fault tolerance and high availability, data is automatically replicated 
across multiple database instances within the same RDMA set. 
Given the transient nature of the generated materials, which become obsolete after client access, 
strong consistency consensus is not required for replication.

Each generated result,whether text, image, or video,is stored in the database alongside its associated 
UID for client retrieval. 
Once a client successfully fetches the result or after a predefined time-to-live (TTL) expires, 
the data is automatically purged. 
This efficient lifecycle management ensures that storage resources are prioritized 
for in-progress and recently completed tasks.

This lightweight and transient storage model enables \sysname{} to provide low-latency 
responses while upholding the reliability standards expected of a distributed inference system.

\section{Workflow Instances}
\label{ss:workflow}

In \sysname{}, an \AIGC{} workflow is a user-defined procedure designed to generate content 
through a sequence of AI models. 
As introduced in Section~\ref{ss:wan}, 
a typical workflow may involve generating a video from an image using a model such as Wan2.1~\cite{wan2.1}. 
This process begins with stages like T5 \& Clip and VAE Encode, 
which encode the input image into a latent tensor. 
A diffusion stage then iteratively samples this representation to produce an intermediate output, 
which is finally reconstructed into a video through a VAE Decode stage. 
Users can extend such workflows by incorporating additional models,
such as LoRA~\cite{sun2025one},to enhance output quality, 
introduce stylistic variations, or support more complex generation tasks.

A workflow instance in \sysname{} refers to a runtime entity 
that executes one or more models within the overarching \AIGC{} workflow, 
based on the user’s partitioning of the computational graph. 
For example, one instance may be dedicated to running the T5 \& Clip and VAE Encode stages, 
processing images and forwarding the encoded results. 
This design enables efficient resource utilization and pipeline parallelism.

Each workflow instance comprises four core components, 
as illustrated in Figure~\ref{fig:workflow}:
\begin{itemize}
    \item \textbf{TaskManager}: Initializes, coordinates, and monitors task execution within the instance.
    \item \textbf{RequestScheduler}: Manages incoming requests and assigns them to available workers.
    \item \textbf{TaskWorkers}: A pool of workers that perform actual model inference or data processing operations.
    \item \textbf{ResultDeliver}: Handles the forwarding of processed results to the next workflow stage or to the database for storage.
\end{itemize}

This component-based architecture promotes modularity, scalability, 
and a clear separation of concerns within each instance.

\begin{figure}[t]
    \includegraphics[width=0.4\textwidth]{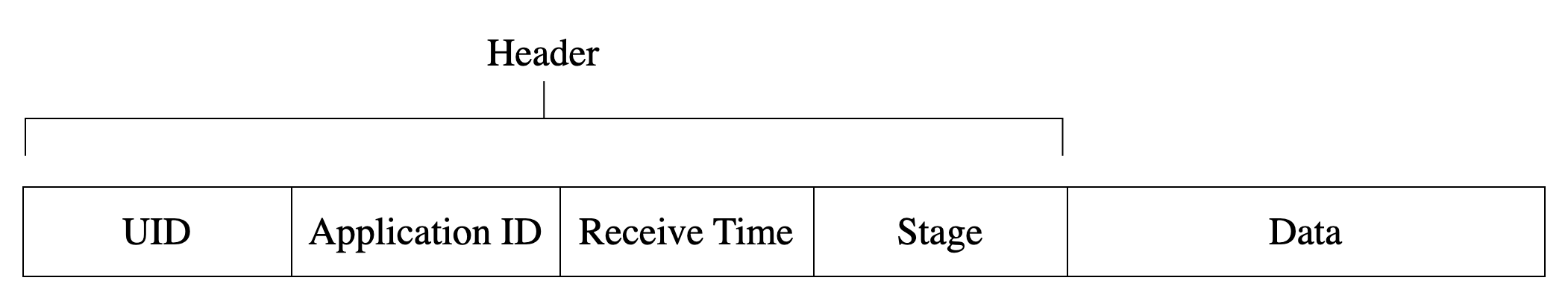}
    \caption{Workflow Message that contains a header and a value produced by the clients or the instances.}
    \label{fig:data}
    \vspace{-4mm}
\end{figure}

\subsection{Workflow Message}
\label{ss:data}
Before detailing each component, we first describe the structure of workflow messages. 
Each message consists of two parts: a payload and a header. 
The payload contains either the client’s original 
input or the output produced by a workflow instance, 
while the header carries metadata essential for request routing and processing. 
The overall message structure is depicted in Figure~\ref{fig:data}.

The header includes the following fields:
\begin{itemize}
    \item A UUID assigned by the proxy to uniquely identify the generation request throughout its entire lifecycle (as described in Section 3.2).
    \item A timestamp recorded by the proxy when the request is first received, used for monitoring task latency and system performance (Section 3.2).
    \item An application ID that specifies the processing logic and determines the next instance(s) to which the message should be routed (Section 4.5).
    \item The stage the messages is processing.
\end{itemize}

The overall structure is illustrated in Figure~\ref{fig:data}.

\subsection{TaskManager}
\label{ss:task_manager}

As each workflow instance is designed to execute arbitrary user-defined sub-models, 
a TaskManager component is responsible for dynamically determining which models the instance should run 
and identifying the subsequent instances to which results should be forwarded.

In \sysname{}, all workflow configurations are managed by the \NM{}. 
During initialization, the TaskManager queries the NM to retrieve the specific models 
and tasks assigned to its instance. 
Based on this assignment, it initializes the corresponding TaskWorkers to prepare for model inference. 
If no tasks are currently assigned,due to sufficient existing resource capacity,
the instance remains idle and may be repurposed for lower-priority operations, such as model training.

Concurrently, the TaskManager obtains routing information from the \NM{} that specifies the next hop in the workflow,
i.e., the downstream instances that should receive processed results. 
These routing details are configured into the ResultDeliver module (Section~\ref{ss:rd}), 
which handles efficient transmission of output data to subsequent instances over the RDMA network.

Additionally, the TaskManager maintains ongoing communication with the NM to stay synchronized 
with the latest task allocation policies, 
which may involve reassigning currently executing tasks to other instances. 
It periodically reports real-time GPU utilization metrics to the \NM{}, 
supporting dynamic load-aware scheduling. 
This feedback mechanism enables the \NM{} to redistribute tasks across available instances in a manner that optimizes 
overall GPU utilization and alleviates performance bottlenecks.

\begin{figure}[t]
    \center
    \begin{tabular}{c}
    \includegraphics[width=0.35\textwidth]{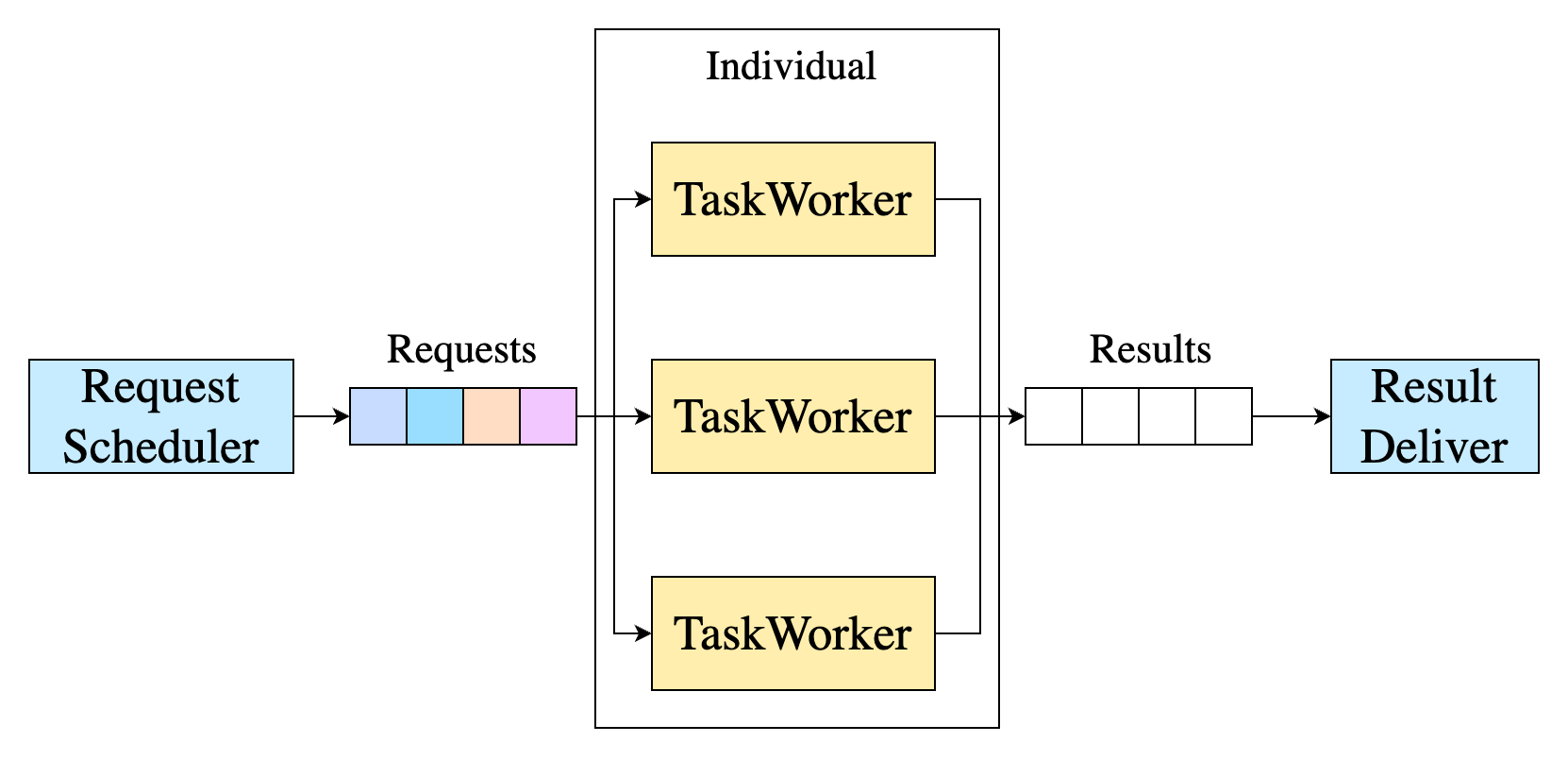} \\
        a) Individual Mode (IM) \\
    \includegraphics[width=0.35\textwidth]{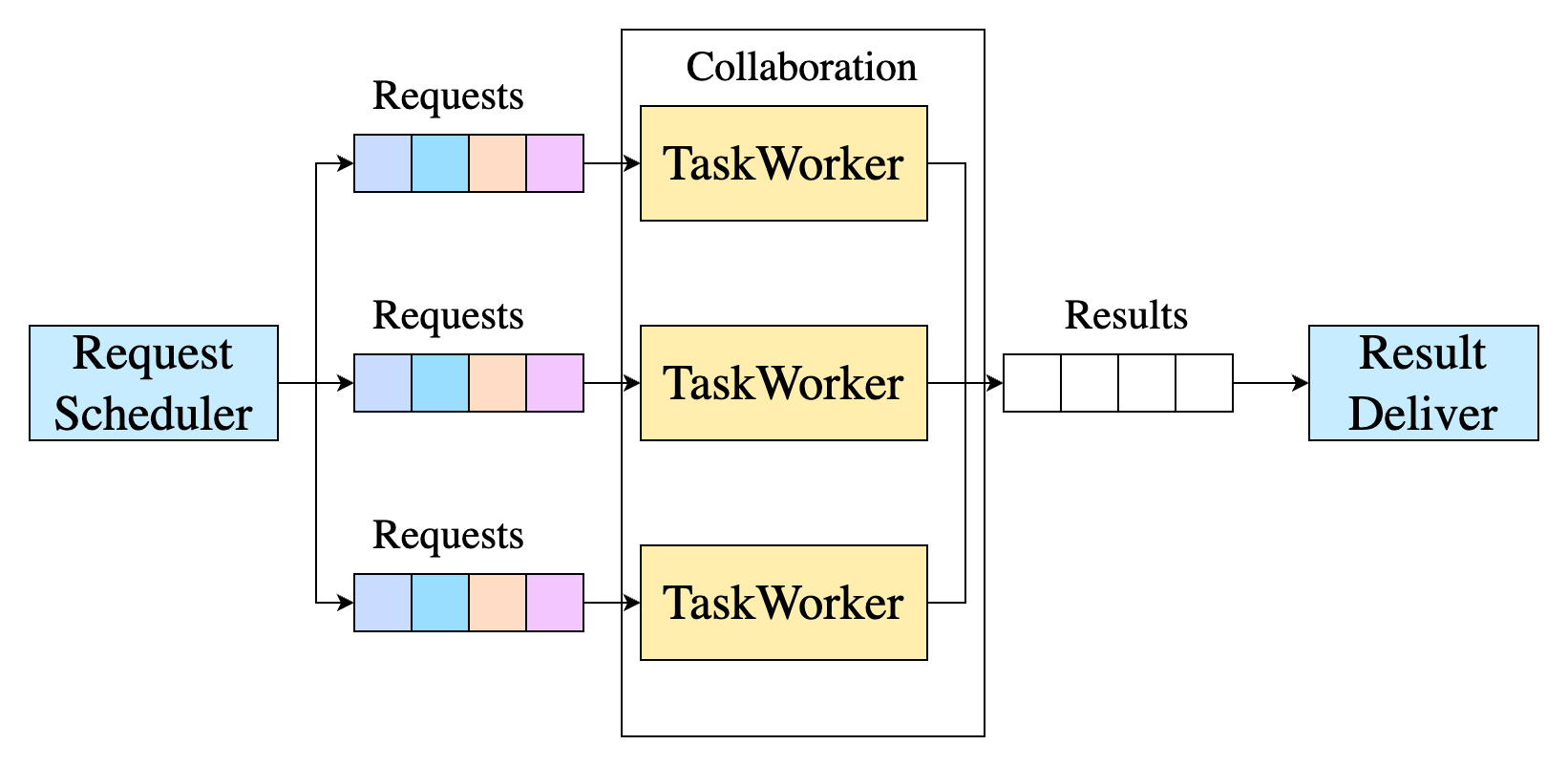} \\
        b) Collaboration Mode (CM)
     \end{tabular}
    \caption{Two modes of request assignments of RequestScheduler ($RS$).}
    \label{fig:rs_queue}
    \vspace{-5mm}
\end{figure}

\subsection{RequestScheduler ($RS$)}
\label{ss:rs}

To minimize latency in task processing, the RequestScheduler ($RS$) operates asynchronously, 
receiving incoming requests that are written directly into local memory via RDMA. 
This bypasses intermediate buffering and reduces processing overhead. 
The $RS$ continuously monitors a designated memory region for newly arrived requests. 
Once detected, each request is distributed to TaskWorkers according to the configured execution strategy.

\sysname{} supports two execution strategies:
\begin{itemize}
    \item Individual Mode ($IM$): In this mode, each worker processes requests independently using a dedicated GPU. 
        Instead of pushing requests directly to workers,which could cause load imbalance,
                the $RS$ maintains a shared local request queue (Figure~\ref{fig:rs_queue}a). 
                Idle workers autonomously fetch tasks from this queue, preventing any single worker from being overloaded while others remain idle. 
                This pull-based approach promotes natural load balancing.
    \item Collaboration Mode ($CM$): In this mode, all workers on a node cooperate to process a single request using all available GPUs. 
    When a new request arrives, the $RS$ broadcasts a copy to every worker (Figure~\ref{fig:rs_queue}b). 
    This ensures all workers receive the same inputs, making $CM$ suitable for distributed computation across multiple GPUs.
\end{itemize}

This flexible scheduling mechanism allows \sysname{} to efficiently handle diverse task granularities and computational requirements, 
enhancing system throughput and resource utilization.

\subsection{TaskWorker}

TaskWorker executes the requests delivered by the $RS$ according
to the predefined execution strategy. For certain tasks,such as data
preprocessing,that cannot utilize more than one GPU, each worker
processes requests individually to maximize parallelism. In contrast,
for computationally intensive operations like model sampling, the
system employs distributed computation strategies such as Pipeline
Parallelism ($PP$) or Tensor Parallelism ($TP$)~\cite{huang2019gpipe,wang2022tesseract,narayanan2019pipedream} to leverage
multiple GPUs, thereby reducing execution latency.
Within TaskWorker, the specific execution behavior is defined
by user-provided code. When a request is received, the TaskWorker
invokes the corresponding user function based on an application
identity attached to the request data, which specifies which appli-
cation logic should be executed.
To simplify model composition and interoperability, intermedi-
ate results can be represented in various data formats, including
tensors or raw binary data. Tensor data is placed directly into GPU
memory at the start of execution, enabling subsequent models to
access it immediately without unnecessary data movement. This
design minimizes transfer overhead and supports efficient pipeline
execution.

\subsection{ResultDeliver ($RD$)}
\label{ss:rd}
Upon completion of execution, TaskWorkers forward the results to the next instance in the workflow. 
In Individual Mode, each worker sends its results immediately after processing to minimize end-to-end latency. 
In Collaboration Mode, partial results from all workers are aggregated into a single consolidated output before delivery, 
which reduces communication overhead and ensures computational consistency.

The ResultDeliver ($RD$) component manages the routing of these outputs. 
$RD$ obtains routing information from the TaskManager (Section~\ref{ss:task_manager}), 
which in turn retrieves destination details from the \NM{}. 
Since a single instance may participate in multiple workflows, 
$RD$ uses the application identity included in the request (Section~\ref{ss:data}) 
to determine the appropriate next hop.

When multiple destination instances are available, $RD$ uses a round-robin mechanism to distribute results evenly. 
This prevents overloading any single downstream instance and improves load balancing across the system.

\section{Pipelining And Request Monitor}
\label{ss:fast_reject}

\begin{figure}[t]
    \includegraphics[width=0.48\textwidth]{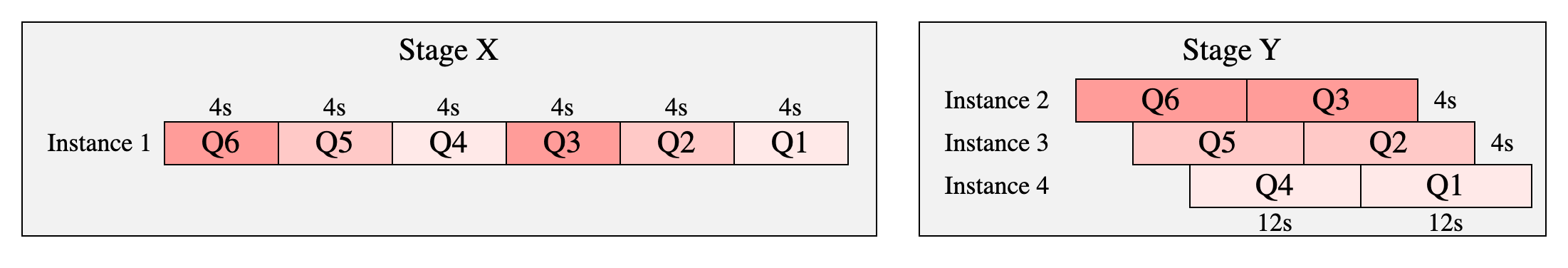}
    \caption{The pipeline example for one instance for stages $X$ and 3 instances for $Y$ with the execution cost $T_X=4s$ and $T_Y=12s$.
            Stage $X$ is applied with individual mode with 1 worker while stage $Y$ is applied with share mode.
            Stage $X$ accepts the requests and the stage $Y$ outputs the results every 4 seconds.
            }
    \label{fig:pipeline1}
    \vspace{-4mm}
\end{figure}

\begin{figure}[t]
    \includegraphics[width=0.48\textwidth]{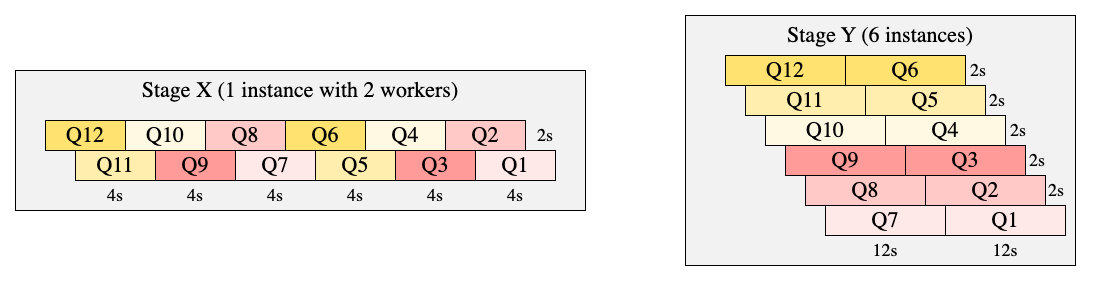}
    \caption{The pipeline example for one instance with 2 workers in stages $X$ and 6 instances in stage $Y$ with the execution cost $T_X=4s$ and $T_Y=12s$.
            Stage $X$ accepts the requests and the stage $Y$ outputs the results every 2 seconds.
            }
    \label{fig:pipeline5}
    \vspace{-4mm}
\end{figure}

Although separating different stages into distinct instances improves resource utilization, 
it can easily lead to high latency if the proxy continuously sends requests to the first stage, 
resulting in increased end-to-end delay for clients. 
To mitigate this, \sysname{} employs a pipelining technique.

Consider two stages, $X$ and $Y$, running on different instances, which take $T_X$ and $T_Y$ seconds to process a request, 
respectively. When $T_X \ge T_Y$, stage $X$ finishes processing before the next request arrives. We therefore focus on the case where $T_X < T_Y$.

To illustrate the pipelining mechanism, assume stage $X$ uses one instance with one worker in Individual Mode, 
while stage $Y$ is assigned $M = \lceil T_Y/T_X \rceil$ instances in Shared Mode. 
The proxy can then submit requests to $X$ every $_TX$ seconds, and stage $Y$ produces one result every $T_X$ seconds. 
This ensures that no request is delayed within the instances. 
The total latency for a request is $T(q) = T_X + T_Y + Network(q)$, where $Network(q)$ represents the inter-instance message transfer time.

For example, as shown in Figure~\ref{fig:pipeline1}, when $T_X = 4s$ and $T_Y = 12s$, 
stage $X$ accepts requests every $4$ seconds and outputs intermediate results at the same rate. 
Each instance in stage $Y$ receives inputs every $4$ seconds and produces final outputs after $12$ seconds, maintaining the input frequency.

We now extend this approach to multiple workers in stage $X$. 
If stage $X$ uses $K$ workers, we assign $M = \lceil K * T_Y / T_X \rceil$ instances to stage $Y$. 
In this case, the proxy can submit requests every $T_X/K$ seconds, 
and stage $Y$ will correspondingly produce results every $T_X/K$ seconds. 
Figure~\ref{fig:pipeline5} illustrates an example where stage $X$ uses $2$ workers and stage $Y$ uses $6$ instances.

\begin{theorem}
    \label{ss:latency}
    Consider two stages $X$ and $Y$ with execution times $T_X$ and $T_Y$ respectively, 
    where $T_X < T_Y$. If stage $X$ processes $K$ requests in parallel and stage $Y$ processes $M$ requests in parallel, where
    $M = \lceil T_Y/T_X * K\rceil$, then the output rates of stages $X$ and $Y$ will be equal.
\end{theorem}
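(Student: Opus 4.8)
The plan is to treat each stage as a collection of parallel servers in steady state, compute the aggregate departure (output) rate of each stage directly, and then show the two rates coincide. First I would establish the output rate of stage $X$. Since stage $X$ has $K$ workers each requiring $T_X$ to complete a request, and (by the pipelining setup, where the proxy submits a request every $T_X/K$ seconds) the workers are staggered so that completions are evenly spaced, stage $X$ emits one result every $T_X/K$ seconds; equivalently its output rate is $K/T_X$.

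Next I would track what happens at stage $Y$. The ResultDeliver module distributes stage $X$'s outputs across the $M$ instances of stage $Y$ in round-robin order, so each individual $Y$ instance receives every $M$-th result. Because stage $X$ emits evenly at spacing $T_X/K$, each $Y$ instance therefore sees a regular inter-arrival time of $\tau = M\,T_X/K$. The crucial inequality to verify is $\tau \ge T_Y$, i.e. that a $Y$ instance finishes its current request before the next one arrives, so that no backlog accumulates. This is exactly where the choice $M = \lceil T_Y/T_X \cdot K\rceil$ enters: the ceiling guarantees $M \ge T_Y K / T_X$, hence $\tau = M\,T_X/K \ge T_Y$, so each $Y$ instance is never oversubscribed.

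With no backlog at any $Y$ instance, in steady state each instance departs one request per inter-arrival period, i.e. at rate $1/\tau = K/(M\,T_X)$. Summing over the $M$ instances gives an aggregate output rate for stage $Y$ of $M \cdot K/(M\,T_X) = K/T_X$, which is precisely the output rate of stage $X$. This establishes the claimed equality.

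I expect the main obstacle to be the justification of the steady-state and regularity assumptions rather than the arithmetic. Specifically, I must argue that after an initial transient (while the pipeline fills) the completions of stage $X$ really are evenly spaced, and that round-robin routing preserves this regularity at each $Y$ instance; I should also note that the ceiling introduces a slight slack (so $\tau$ may strictly exceed $T_Y$), meaning individual $Y$ instances may idle briefly, but this slack only serves to prevent backlog and does not alter the balanced rates. Handling these timing and steady-state details carefully is the delicate part; once they are in place, the rate computation is immediate.
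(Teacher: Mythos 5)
Your proof is correct, and it hinges on the same key inequality as the paper's: $M = \lceil T_Y/T_X \cdot K\rceil$ gives $M\,T_X/K \ge T_Y$, which is what guarantees stage $Y$ never falls behind the arrival stream from stage $X$. The difference lies in how you establish that $Y$ keeps up. The paper treats $Y$'s $M$ instances as a shared pool: it argues the first output appears no later than $T_Y$, and that thereafter one worker becomes free every $T_X/K$ seconds, matching the arrival rate, so one completion occurs per arrival. You instead fix the dispatch discipline to round-robin (which is in fact what the system's ResultDeliver component does, per Section~\ref{ss:rd}), decompose the arrival stream into $M$ periodic substreams with inter-arrival time $\tau = M\,T_X/K \ge T_Y$, show each instance individually is never oversubscribed, and sum the per-instance departure rates $1/\tau$ to obtain $K/T_X$. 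Your decomposition buys a cleaner, fully local argument --- each instance is analyzed in isolation as a periodic server with slack --- and it avoids the paper's slightly misleading step, where ``the output rate of $Y$ is $M/T_Y \ge K/T_X$'' conflates $Y$'s \emph{capacity} with its actual departure rate (in steady state the departure rate is $K/T_X$, not $M/T_Y$). The paper's pooled argument, on the other hand, is agnostic to the dispatch discipline: it works for any work-conserving assignment of arrivals to free workers, not only round-robin. Both arguments share the transient caveat you correctly flag, namely that stage $X$'s completions are evenly spaced once the pipeline fills.
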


\begin{proof}
    Since stage $X$ processes $K$ requests concurrently, 
    each taking time $T_X$, its throughput is $K/T_X$ outputs per second. 
    Equivalently, it produces one output every $T_X/K$ seconds, which also becomes the input rate for stage $Y$.

    Stage $Y$ can process $M$ requests in parallel. After receiving $M$ inputs, 
    the total time required for the first batch of $M$ outputs is at most $T_Y$ seconds. Since $M * T_X/K \ge T_Y$,
    the first output from $Y$ is produced no later than time $T_Y$. 
    Thereafter, for every $T_X/K$ seconds, one output is produced by $X$, and one output is completed by $Y$. 
    This is because one of the $M$ workers in $Y$ becomes available every $T_X/K$ seconds, matching the input rate from $X$.
    Hence, the output rate of $Y$ is $M/T_Y \ge K/T_X$,
    which ensures that the system reaches a steady state where the output rate of $Y$ equals the input rate from $X$. 
    Therefore, both stages produce outputs at the same rate of $K/TX$ per second.
\end{proof}

\textbf{Request Monitor and Fast Reject} 
To maintain stable processing latency, the proxy employs a fast-reject mechanism: 
if a client query arrives when the system is not ready to accept new requests, 
the proxy immediately rejects it to prevent excessive delays.

As established in Theorem~\ref{ss:latency}, 
it is always possible to assign a suitable value of $K$ to the next stage following the proxy such that the system produces outputs at a steady rate. 
The Request Monitor continuously calculates $K$ using real-time instance information obtained from the \NM{}.

Whenever the incoming request rate exceeds $K/T_X$, the proxy rejects additional requests. 
This fast-reject capability ensures that the system avoids overload and maintains stable performance under high load conditions.

\vspace{-2mm}
\section{RDMA Network}
The RDMA network (Section~\ref{ss:pre_rdma}) serves as a fundamental enabler in modern machine learning systems, 
largely due to its zero-copy capability that allows direct data transfer between memory regions without involving the host CPU. 
Systems such as Mooncake~\cite{qin2024mooncakekvcachecentricdisaggregatedarchitecture} and Deepseek~\cite{zhao2025insightsdeepseekv3scalingchallenges} 
often rely on NCCL (Section~\ref{ss:pre_nccl}) for message passing. 
However, despite its convenience, NCCL is unsuitable for \sysname{} due to several key limitations:

\begin{enumerate} [wide,nosep,label=(L\arabic*),ref={L\arabic*}]
    \item Limited Data Type Support: NCCL only supports tensor data, whereas messages in \sysname{} may contain arbitrary data types (Section~\ref{ss:data}).
    \item \label{c:nccl_c2} Fixed Message Size Requirement: NCCL requires fixed message sizes between endpoints, 
            while \sysname{} must accommodate dynamic message sizes resulting from variable user-defined workflows and models.
    \item GPU Interference: Because NCCL accesses GPU memory directly for transfers, it can interfere with concurrent computation and degrade worker performance.
    \item Lack of Message Context: NCCL uses end-to-end transmission semantics, which obscures message origin 
            and complicates the timely triggering of corresponding processing workflows.
\end{enumerate}

Due to these constraints, \sysname{} uses RDMA directly for message delivery. Nevertheless, employing bare-metal RDMA introduces two primary challenges:

\begin{enumerate} [wide,nosep,label=(C\arabic*),ref={C\arabic*}]
\item \label{c:rdma_c1} Memory Management for Dynamic Transfers: RDMA is a point-to-point protocol that requires establishing 
            a queue pair (QP) between communication peers. 
            The sender directly accesses the receiver’s memory via registered addresses. 
            Since registered memory regions are fixed in size, 
            effectively managing continuously arriving messages within finite memory becomes a critical issue.
\item \label{c:rdma_c2} Notification Mechanism for Data Arrival: \sysname{} uses one-sided RDMA operations, 
    where data arrival does not automatically notify the CPU. 
    Thus, a mechanism is required to alert the $RS$ when new messages arrive and indicate their location in memory.
\end{enumerate}

\begin{figure}[t]
    \includegraphics[width=0.45\textwidth]{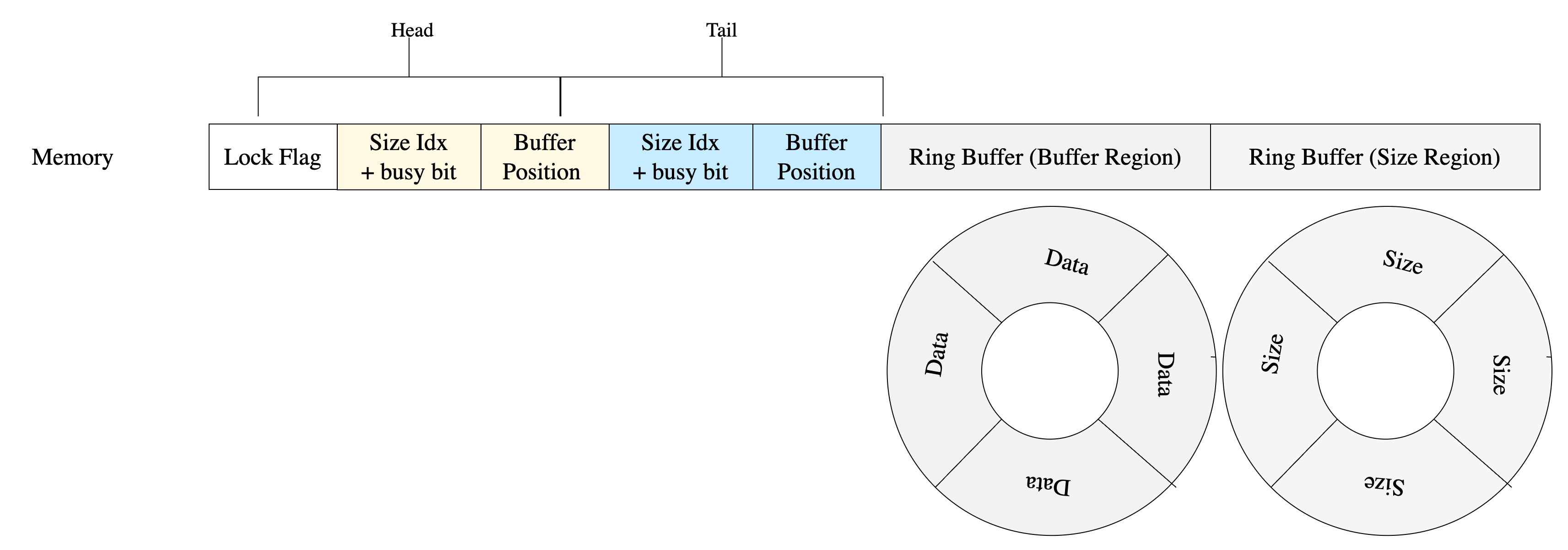}
    \caption{Ring Buffer}
    \label{fig:queue}
    \vspace{-4mm}
\end{figure}



\subsection{Ring Buffer}
To address challenge~\ref{c:rdma_c1}, 
\sysname{} employs a ring buffer (also referred to as a circular queue) inspired by~\cite{dragojevic2014farm}. 
For challenge~\ref{c:rdma_c2}, all senders share the same memory region, enabling the receiver to monitor only a single queue.

Locking is a commonly used mechanism to manage concurrent access to shared resources. 
However, in distributed systems, deadlock prevention becomes critical. 
Although Optimistic Concurrency Control (OCC)~\cite{kung1981optimistic} 
and two-phase locking (2PL)~\cite{soisalon1995partial} are widely adopted solutions that use versioning to protect data, 
they require CPU involvement and are thus unsuitable for the RDMA environment.

Although several non-blocking ring buffer implementations exist~\cite{harris2002practical, tsigas2001simple, kriz2013lock}, 
these are designed for fixed-size elements and do not accommodate variable message sizes, 
making them incompatible with \sysname{}'s requirements (\ref{c:nccl_c2}). 
To the best of our knowledge, no existing circular queue implementation adequately supports dynamic data assignment under our target constraints.

Our ring buffer supports multiple producers 
and a single consumer while accommodating dynamically sized messages. 
It is wait-free for the consumer; producers may contend but are only required to wait in cases of conflict. 
This ensures the consumer is never blocked: whenever new data is available in memory, 
it can be processed immediately. 
\textbf{We assume that consumer operations do not fail, as both the queue and the consumer are co-located.}

As illustrated in Figure~\ref{fig:queue}, the ring buffer’s memory structure consists of four components:
\begin{itemize}
\item \textbf{A lock region}, updated exclusively by senders using Compare-and-Swap (CAS) operations.
\item \textbf{A fixed-length header} containing the head and tail pointers that indicate where producers should write and consumers should read.
\item \textbf{A dynamically sized ring buffer region} that holds the actual message content.
\item \textbf{A size region} that records the size of each data entry in the buffer region. 
This region is essential for deadlock prevention (Section~\ref{ss:deadlock}). 
Each entry in the size region includes a busy bit indicating whether the slot has been written; 
this bit can only be cleared by the consumer.
\end{itemize}

We define the following operations for senders (producers) and the receiver (consumer).

\textbf{Sender (Producer) Operations}

To append data to the buffer, a sender performs the following steps:
\begin{enumerate}
\item Acquire the lock using a CAS-based spinlock.
\item Read the current tail position from the shared header. The header contains both the buffer tail (pointing to the next write location in the buffer region) and the size tail (pointing to the corresponding slot in the size region).
\item If insufficient space remains for the new data, release the lock and abort.
\item Check whether the next slot in the size region has been updated. 
If it has, update the header before writing new data. 
This situation may arise if a previous sender was lost after partial updates (see Proof Case~\ref{proof:case_7}).
\item Write the data into the buffer region starting at the buffer tail position.
\item Write the data size into the size region at the size tail position and set the busy bit.
\item Update the tail position in the header to reflect the new end of the data.
\item Release the lock.
\end{enumerate}

\textbf{Receiver (Consumer) Operations}

To consume data from the buffer, the receiver performs the following:
\begin{enumerate}
\item Read the current head position from the header.
\item If no new data is available, wait for a predefined interval and retry.
\item Read the next available data item from the buffer region.
\item Reset the busy bit in the size region.
\item Update the head position to effectively remove the consumed item.
\end{enumerate}

The ring buffer ensures that if a producer writes data starting at address $R$, 
the consumer will eventually read that same data from $R$, provided that no deadlock occurs.

\textbf{Buffer Region and Size Region}

Both the buffer region and the size region are implemented as ring buffers. 
The positions of read and write operations within these regions are managed by pointers stored in the header.

Due to the variable size of data elements, 
the pointer for the buffer region is updated dynamically based on the size value recorded in the corresponding slot of the size region:

\begin{equation*}
P_{b} = 
    \left\{
        \begin{aligned}
            &P_{b} + size(P_{size}), & if P_{b} + size(P_{size}) < RegionSize \\
            &0, & otherwise
        \end{aligned}
    \right.
\end{equation*}

The pointer for the size region, in contrast, is updated in fixed increments (by one unit per operation), wrapping around via modulo arithmetic:

\begin{equation*}
P_{size} = (P_{size} + 1)\mod \text{(RegionSize)}
\end{equation*}

Finally, both the pointers will be updated in the header in atomic operations.

\textbf{Deadlock and Liveness}
\label{ss:deadlock}

Unlike traditional ring buffers operating in multi-core environments, 
operations initiated by senders in a distributed setting may be lost, 
potentially leading to deadlocks. 
Most existing deadlock-mitigation strategies employ timer-based mechanisms~\cite{knapp1987deadlock, holt1972some, chandy1983distributed, singhal2002deadlock}. 
If a timer expires, the system assumes the sender has failed and releases the lock automatically.

However, ensuring liveness,i.e., that after a sequence of reads and writes the queue returns to a consistent state where the reader can retrieve valid data,
remains challenging. 
\sysname{} incorporates a dedicated \textit{size region} to help recover from deadlock scenarios.

Given the low latency of the RDMA network, \sysname{} uses a short timeout interval. 
When a timeout occurs, one of the senders releases and reacquires the lock. 
Nevertheless, delayed messages may still compromise reliability. 
Since \sysname{} is designed for time-sensitive workloads, such messages are allowed to overwrite current buffer entries. 
To detect data corruption, a checksum is applied to the data header. 
The consumer verifies this checksum upon reading; if a mismatch is detected, the data is discarded.

Thanks to the short timeout interval, obsolete updates can corrupt at most one subsequent data entry. 
Furthermore, since data is consumed shortly after being written, the time window during which messages remain vulnerable to corruption is very narrow. 
As a result, the probability of consecutive data corruption in \sysname{} is extremely low.

\textbf{Liveness}
We now enumerate possible execution scenarios and describe how the queue recovers liveness in each case.

To facilitate the discussion, we define the following atomic actions:
\begin{itemize}
\item $\text{Lock}(X)$: sender $X$ acquires the lock.
\item $\text{Unlock}(X)$: sender $X$ releases the lock.
\item $\text{WB}(X)$: sender $X$ writes data to the buffer at the tail.
\item $\text{WL}(X)$: sender $X$ writes the data size to the size region.
\item $\text{RB}(Z)$: receiver $Z$ reads data from the buffer.
\item $\text{RL}(Z)$: receiver $Z$ reads the data size.
\item $\text{GH}(X)$: $X$ reads the header and checks the size region.
\item $\text{UH}(X)$: $X$ updates the header.
\item $\text{TL}$: the lock times out.
\end{itemize}

\begin{enumerate} [wide,nosep,label=(Case\arabic*),ref={Case\arabic*}]
\item
$\text{Lock}(X) \rightarrow \text{TL} \rightarrow \text{Lock}(Y) \rightarrow \text{GH}(Y) \rightarrow \text{WB}(Y) \rightarrow \text{WL}(Y) \rightarrow \text{UH}(Y) \rightarrow \text{Unlock}(Y)$.
Here, $X$ is lost and $Y$ reacquires the lock. 
The receiver $Z$ will read valid data written by $Y$ and proceed to the next location.

\item 
$\text{Lock}(X) \rightarrow \text{GH}(X) \rightarrow \text{TL} \rightarrow \text{Lock}(Y) \rightarrow \text{GH}(Y) \rightarrow \text{WB}(Y) \rightarrow \text{WL}(Y) \rightarrow \text{UH}(Y) \rightarrow \text{Unlock}(Y) \rightarrow \text{WB}(X) \rightarrow \text{WL}(X)$.  
$X$ is delayed and overwrites $Y$'s data. $\text{WL}(X)$ fails due to the busy bit. If the data sizes from $X$ and $Y$ match, $Z$ reads valid data; otherwise, $Z$ may skip invalid entries and proceed using size metadata.

\item \label{proof:case_3}
$\text{Lock}(X) \rightarrow \text{GH}(X) \rightarrow \text{TL} \rightarrow \text{Lock}(Y) \rightarrow \text{GH}(Y) \rightarrow \text{WB}(Y) \rightarrow \text{WB}(X) \rightarrow \text{WL}(Y) \rightarrow \text{UH}(Y) \rightarrow \text{Unlock}(Y) \rightarrow \text{WL}(X)$.  
$X$ overwrites $Y$'s data late. $\text{WL}(X)$ fails since $\text{WL}(Y)$ occurred first. $Z$ reads using $Y$'s size and moves to the next location.

\item 
$\text{Lock}(X) \rightarrow \text{GH}(X) \rightarrow \text{TL} \rightarrow \text{Lock}(Y) \rightarrow \text{GH}(Y) \rightarrow \text{WB}(Y) \rightarrow \text{WB}(X) \rightarrow \text{WL}(X) \rightarrow \text{WL}(Y) \rightarrow \text{UH}(X) \rightarrow \text{Unlock}(X)$.  
$X$ updates the size before $Y$; $\text{WL}(Y)$ fails. $Z$ reads $X$'s data and continues.

\item 
$\text{Lock}(X) \rightarrow \text{GH}(X) \rightarrow \text{TL} \rightarrow \text{Lock}(Y) \rightarrow \text{GH}(Y) \rightarrow \text{WB}(X) \rightarrow \text{WB}(Y) \rightarrow \text{WL}(Y) \rightarrow \text{WL}(X) \rightarrow \text{UH}(Y) \rightarrow \text{Unlock}(Y)$.  
$X$ writes before $Y$, but $Y$ overwrites and finalizes the entry. $Z$ reads valid data from $Y$.

\item 
$\text{Lock}(X) \rightarrow \text{GH}(X) \rightarrow \text{TL} \rightarrow \text{Lock}(Y) \rightarrow \text{GH}(Y) \rightarrow \text{WB}(X) \rightarrow \text{WB}(Y) \rightarrow \text{WL}(X) \rightarrow \text{WL}(Y) \rightarrow \text{UH}(X) \rightarrow \text{Unlock}(X)$.  
Similar to Case~\ref{proof:case_3}: $X$ updates the size, but $Y$ overwrites the data. $Z$ skips invalid data and proceeds.

\item \label{proof:case_7}
$\text{Lock}(X) \rightarrow \text{GH}(X) \rightarrow \text{WB}(X) \rightarrow \text{WL}(X) \rightarrow \text{TL} \rightarrow \text{Lock}(Y) \rightarrow \text{GH}(Y) \rightarrow \text{UH}(Y) \rightarrow \text{WB}(Y) \rightarrow \text{WL}(Y) \rightarrow \text{Unlock}(Y)$.  
$X$ is lost after updating the size. $Y$ detects this, updates the header, and writes new data. $Z$ reads both $X$'s and $Y$'s data.

\item 
$\text{Lock}(X) \rightarrow \text{GH}(X) \rightarrow \text{WB}(X) \rightarrow \text{WL}(X) \rightarrow \text{UH}(X) \rightarrow \text{TL} \rightarrow \text{Lock}(Y) \rightarrow \text{GH}(Y) \rightarrow \text{WB}(Y) \rightarrow \text{WL}(Y) \rightarrow \text{UH}(X) \rightarrow \text{Unlock}(X)$.  
Normal case where $X$ does not release the lock. $Z$ still reads the data after $X$'s header update.

\end{enumerate}

In all scenarios, the receiver $Z$ eventually reaches a valid location and continues reading subsequent data correctly. Thus, the queue guarantees liveness.

\begin{theorem}
If a sender $X$ successfully writes data to a position $P$ in the buffer region, 
then the receiver $Z$ will eventually access $P$, but the data at $P$ is not guaranteed to be valid.
\end{theorem}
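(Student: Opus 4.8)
The plan is to split the statement into a liveness half (the receiver $Z$ eventually reaches $P$) and a validity half (the bytes stored at $P$ need not be the ones $X$ wrote), and to discharge each by reducing it to invariants already justified by the enumerated liveness cases Case1--Case8 together with the sender/consumer protocols. First I would make ``successfully writes to $P$'' precise: letting $s_P$ denote the size-region slot associated with buffer offset $P$, success means both $\text{WB}(X)$ and $\text{WL}(X)$ have completed, so the busy bit at $s_P$ is set and a valid length is recorded there.

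The liveness argument rests on one invariant: the size region is populated \emph{strictly in slot order}, and the busy bit of every slot is set before the header's size tail is permitted to advance past it. In the normal path this holds because the bit is set (step~6) before $\text{UH}$ bumps the tail (step~7); in the lost-sender path it holds because the next sender's step~4 check refuses to skip a slot whose busy bit is already set, first advancing the header to absorb the orphaned entry — precisely Case~\ref{proof:case_7}. Hence, for the tail to have reached $s_P$, every slot from the consumer's current head up to $s_P$ must carry a set busy bit. Since the consumer is wait-free and, by the stated assumption, never fails, it advances one slot at a time, reading whenever a busy bit is set, clearing it, and bumping both pointers by the \emph{recorded} length. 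That length is fixed by the first successful $\text{WL}$ and thereafter protected by the busy bit — later $\text{WL}$'s fail, as in Case~\ref{proof:case_3} — so the buffer pointer evolves deterministically and lands on the same offsets the producers used. The consumer therefore cannot stall before $s_P$ and must eventually access $P$.

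For the validity disclaimer I would appeal directly to the overwrite scenarios. In Case2 through Case6 a delayed or timed-out sender performs its $\text{WB}$ after another sender has finalized the same slot; the late $\text{WL}$ fails on the busy bit, so the length metadata stays consistent (preserving the liveness argument), but the payload at $P$ may now be a blend of two writes, and if the two lengths differ the entry is partially clobbered. The consumer still reaches $P$, but verifies the header checksum and discards the payload on mismatch — exactly the sense in which the data at $P$ is not guaranteed valid.

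The main obstacle I anticipate is the liveness invariant, specifically ruling out a \emph{hole} in the size region: a slot the tail has passed but whose busy bit is unset, which would block the consumer forever. Closing this requires arguing that step~4's conditional header advance is the \emph{only} mechanism that moves the tail past an orphaned slot, and that it never moves past a slot not yet marked busy, so the set-busy-bit property is genuinely inductive around the ring; the finite case enumeration then supplies the interleavings needed to complete that induction.
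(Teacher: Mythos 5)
Your proposal is correct, and its core invariant is exactly the one the paper's proof turns on: once $X$ completes $\text{WL}(X)$, the busy bit protects the size entry at $s_P$ from any modification until the consumer clears it, so the consumer's pointer arithmetic retraces the producers' path and must land on $P$. The difference is in how the two arguments are organized, and yours is the more careful of the two. The paper argues by contradiction: if $Z$ never reads $P$, then the size entry for $P$ must have been altered, which the busy bit forbids. This silently assumes that the only way $Z$ can miss $P$ is by \emph{skipping} it; it never rules out the other failure mode, namely $Z$ stalling forever at some earlier slot whose busy bit is never set. That is precisely the ``hole'' you flag as the main obstacle, and your inductive formulation --- the header tail never advances past a slot that is not marked busy, with Case~7's conditional header advance as the only mechanism for absorbing an orphaned entry --- is what actually closes it. Your proposal also does two things the paper's proof omits entirely: it pins down what ``successfully writes'' means (both $\text{WB}$ and $\text{WL}$ completed), and it discharges the validity disclaimer in the theorem statement by appealing to the overwrite interleavings (Cases~2--6) and the checksum check, whereas the paper proves only the liveness half and leaves the ``not guaranteed to be valid'' clause unargued. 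In short: same key lemma, but your direct induction over slots is strictly stronger than the paper's contradiction argument, at the cost of needing the slot-order invariant to be stated and proved rather than implicitly assumed.
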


\begin{proof}
Assume, for the sake of contradiction, that $X$ writes data at position $P$ but $Z$ never reads from $P$.
For $Z$ to skip $P$, the size entry corresponding to $P$ in the size region must have been altered after $X$'s write. 
However, once $X$ successfully writes the data, it sets the busy bit in the associated size entry. 
This busy bit prevents any modification to that size entry until $Z$ consumes the data and clears the bit. 

Therefore, the size entry remains unchanged after $X$'s update, 
ensuring that $Z$ will eventually traverse the buffer region along the same logical path as $X$ and read the data from $P$.
\end{proof}
\section{Databases Replication}

\begin{figure}[t]
    \includegraphics[width=0.3\textwidth]{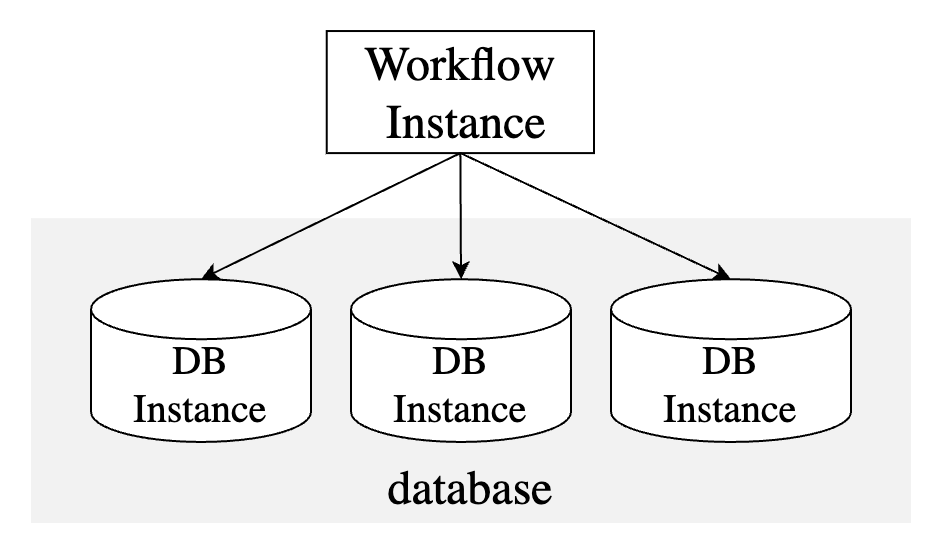}
    \caption{Database Replicatioin}
    \label{fig:db}
    \vspace{-4mm}
\end{figure}

Leveraging the reliable transmission capabilities of the RDMA network greatly simplifies the database replication process in \sysname{}. 
Moreover, due to the transient nature of \AIGC{}-generated results,
which are typically retrieved by clients within a short time window,
the system does not require outcomes to be stored persistently or guaranteed under strong consistency models.

Clients or proxies retrieve results by querying one database instance at a time. 
If the instance is operational and contains the requested result, 
the data is returned immediately. 
If the result is absent,for example, due to ongoing replication or instance failure,
the client proceeds to query another instance in the next attempt. 
This lightweight approach ensures robustness and high availability without incurring the overhead of distributed consensus or synchronous replication.

\section{NodeManager (\NM{})}

\begin{figure}[t]
    \includegraphics[width=0.4\textwidth]{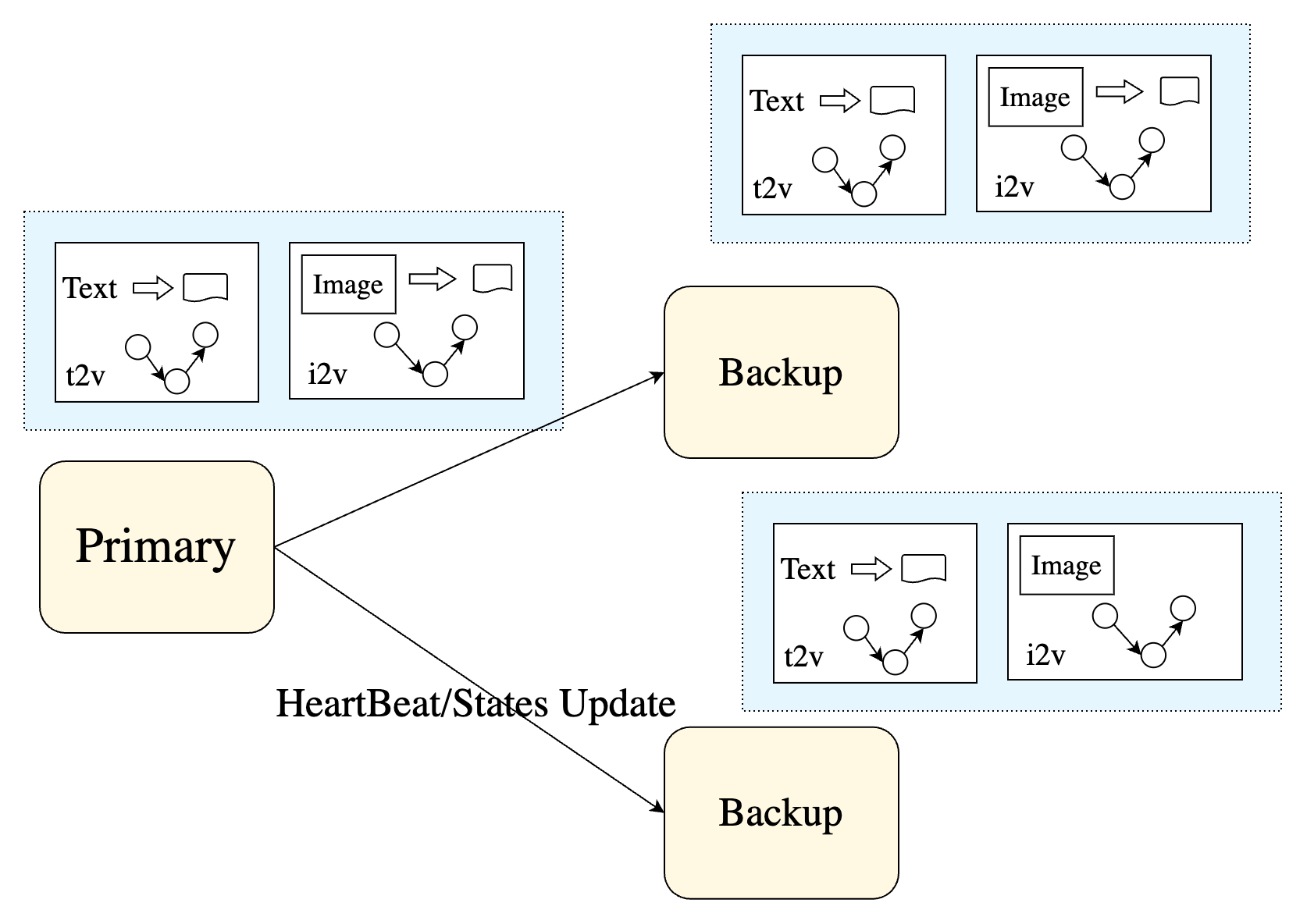}
    \caption{NodeManager}
    \label{fig:node_manager}
    \vspace{-4mm}
\end{figure}

The Node Manager (\NM{}) acts as a centralized orchestrator that maintains metadata about all instances and their roles within each workflow set. 
To communicate with any proxy or database instance, clients must first query the \NM{} to obtain the respective instance’s location.

Beyond role and location management, the \NM{} continuously monitors the GPU utilization of each workflow stage. 
If a stage exhibits insufficient GPU capacity, the \NM{} automatically scales the stage by assigning additional instances, 
thereby enhancing computational throughput and ensuring timely task completion.

\subsection{Primary Election}

The \NM{} employs a primary-backup replication scheme to ensure high availability. 
All read queries are directed to the primary instance, while write operations are propagated to the backup instances, 
as illustrated in Figure~\ref{fig:node_manager}.

As in conventional distributed systems, the primary \NM{} instance periodically broadcasts heartbeats to maintain its presence and authority. 
If any instance detects the absence of heartbeats from the current leader,or the lack of a leader altogether,
it initiates a new leader election using the Paxos consensus algorithm~\cite{paxos}.

The Paxos protocol guarantees that at most one leader is elected at any given time, 
even under concurrent election attempts, 
thereby ensuring system safety. 
This Paxos-based election mechanism provides fault tolerance and strong consistency. 
During elections, instances exchange status information to ensure the new leader incorporates the latest system state, 
preserving metadata consistency across the management layer.

\begin{figure}[t]
    \includegraphics[width=0.4\textwidth]{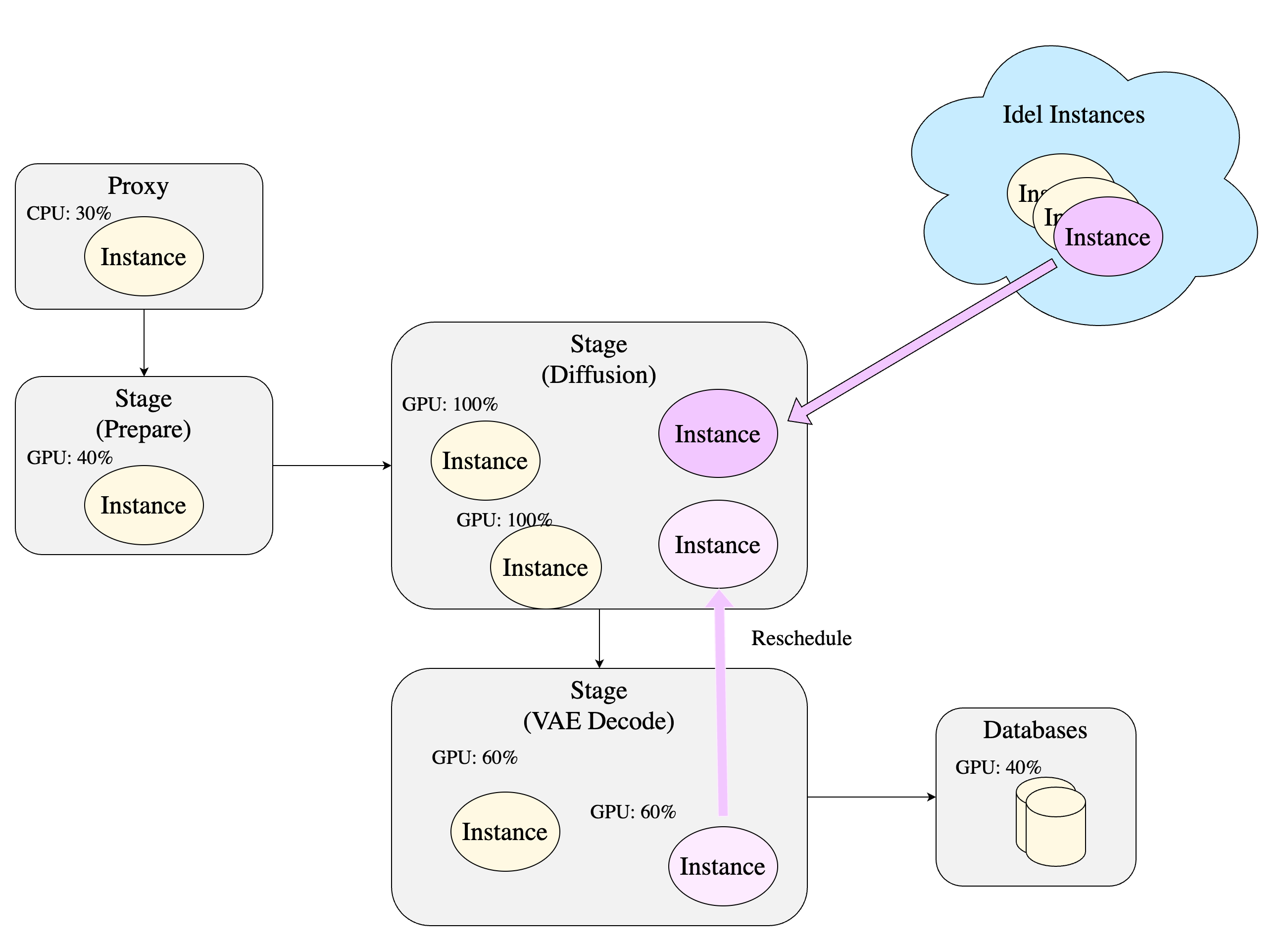}
    \caption{Reschedule instances from VAE Decode and Idel Instance pool to Diffusion.}
    \label{fig:reschedule}
    \vspace{-5mm}
\end{figure}

\subsection{Workflow Instance Assignment}

\sysname{} concurrently executes numerous application workflows, 
each of which may require varying amounts of GPU resources at different stages due to fluctuating and imbalanced request loads. 
This variability necessitates dynamic reassignment of GPU resources to improve overall utilization.

The resource assignment process consists of the following steps:
\begin{itemize}
\item \textbf{Instance GPU Status Reporting:} Each instance periodically reports its GPU utilization to the \NM{}.
\item \textbf{Average GPU Utilization Calculation:} The \NM{} computes the average GPU utilization for each workflow stage within the same Workflow Set (Section~\ref{ss:ws}) over a recent time window (e.g., 5 minutes).
\item \textbf{Busy Stage Identification:} The \NM{} identifies the stage with the highest utilization.
\item \textbf{Instance Assignment:} If the utilization exceeds a predefined threshold (e.g., 85\%), the \NM{} assigns additional instances to that stage.
\item \textbf{State Delivery:} The \NM{} updates the instance’s role and sends the corresponding task and routing information (including next-hop details) 
                to the assigned instance.
\item \textbf{Local State Update:} If the task assignment changes, the instance initializes the corresponding models and updates the next-hop information 
            within its $RD$ component.
\end{itemize}

In addition to active instances processing workflow stages, \sysname{} maintains an Idle Instance Pool. 
These instances may be used for lower-priority tasks, such as model training. 
When GPU resources become scarce, idle instances can be reassigned to support overloaded workflow stages.

Figure~\ref{fig:reschedule} illustrates an example of dynamic instance rescheduling during video generation. 
A request passes through a preparation stage, a diffusion model, and finally a VAE decoder that produces the video. 
In this scenario, an underutilized instance (60\% GPU) in the preparation stage is reassigned to the diffusion stage, 
where all instances are at full capacity. This decision is triggered by regular GPU status reporting.

This instance assignment mechanism enables efficient workload balancing both within and across workflow sets.

\begin{figure}[t]
    \includegraphics[width=0.4\textwidth]{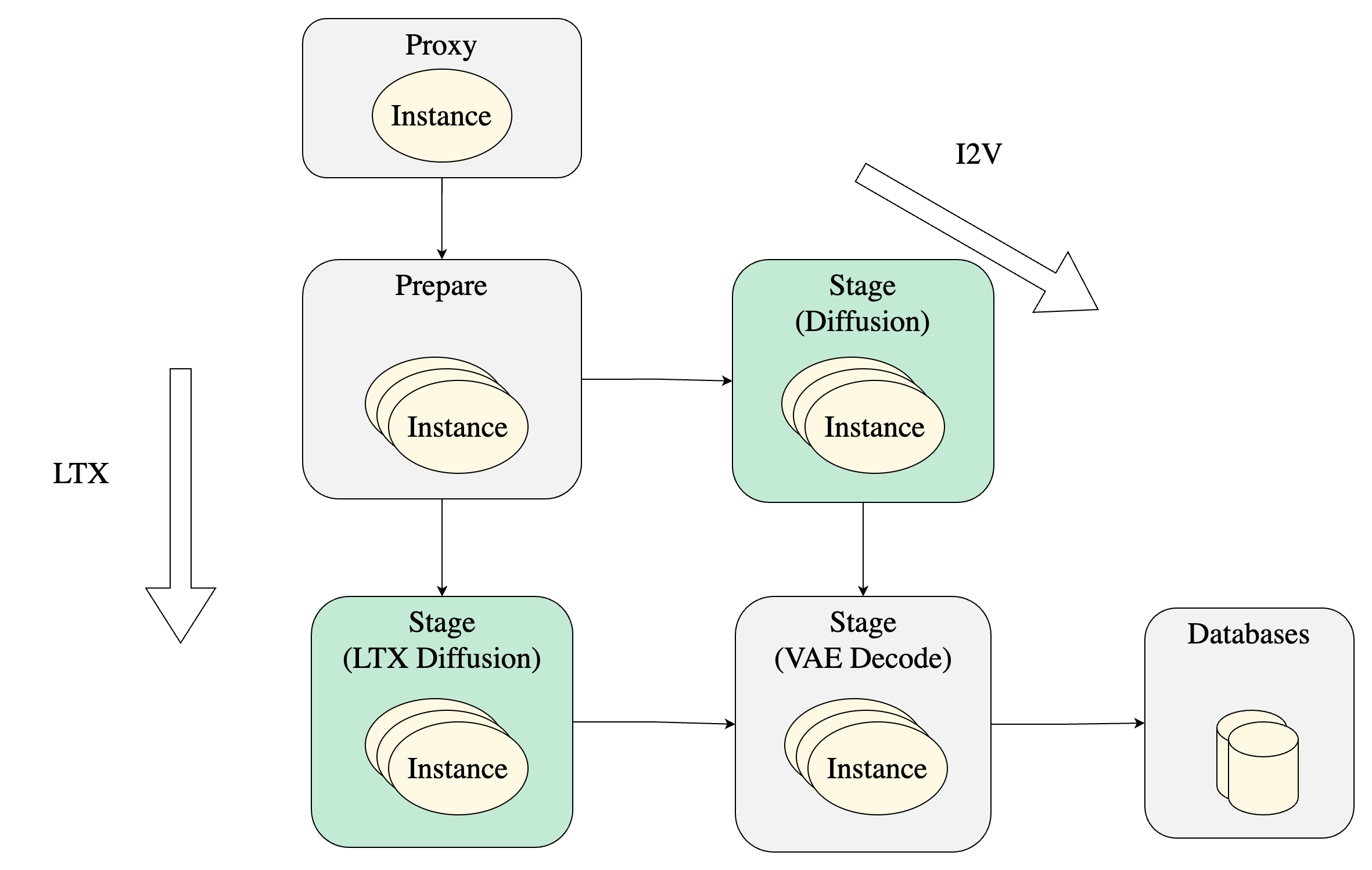}
    \caption{Two workflows share some overlap stages.}
    \label{fig:share}
    \vspace{-5mm}
\end{figure}

\subsection{Instance Sharing}
To improve resource utilization, \sysname{} allows multiple workflows to share computational instances. 
For example, the LTX model,which generates videos from multiple images,
and the image-to-video workflow,which produces videos from a single image,
share all processing stages except their respective diffusion models, 
as illustrated in Figure~\ref{fig:share}. 
This sharing of common stages reduces redundant resource allocation while preserving workflow-specific processing where necessary.
%
%
%

\section{Fault Tolerance}
\sysname{} provides fault tolerance primarily for the \NM{}, 
reflecting the transient nature of \AIGC{} generation tasks. 
If messages are lost between workflow stages,whether due to network issues or data errors (Section~\ref{ss:deadlock}),
the system does not attempt retransmission. 
This design stems from the observation that retries would introduce additional latency, 
and clients or end-users in interactive \AIGC{} applications are generally unwilling to tolerate extended delays. 
By omitting retry mechanisms, 
\sysname{} achieves a simpler and higher-performance architecture. 
The messaging layer can, however, be extended to support guaranteed delivery for applications that require it.

\vspace{-2mm}
\section{Related Work}


\textbf{Inference Acceleration}
Training-based acceleration methods primarily aim to reduce sampling steps~\cite{salimans2022progressive, yue2023resshift, luo2023latent, sauer2024adversarial, yin2024one} 
or optimize model architectures~\cite{li2023snapfusion, yang2023diffusion}. 

However, these techniques focus predominantly on training-related costs and algorithmic complexity. 
In contrast, real-world AIGC applications typically involve multi-model workflows to generate content. 
\sysname{} addresses system-level efficiency by dynamically allocating more resources to critical sub-models, 
thereby accelerating end-to-end pipeline performance.

\textbf{Inference on Disaggregated Architecture}

Significant efforts have been dedicated to improving the efficiency of LLM serving systems through scheduling, memory management, 
and multi-stage resource optimization. 
Systems such as Orca~\cite{yu2022orca} employ iteration-level scheduling to enable concurrent processing across different stages, 
while FlexGen~\cite{sheng2023flexgen}, SARATHI~\cite{agrawal2024taming}, 
and FastServe~\cite{wu2023fast} introduce innovative scheduling 
and swapping strategies to better distribute workloads across limited hardware.

Approaches like Mooncake~\cite{qin2024mooncake} and Splitwise~\cite{patel2024splitwise} further improve throughput by disaggregating 
the prefill and decoding stages. 

Inspired by these works, \sysname{} applies a disaggregated architecture to \AIGC{} workflows by partitioning models across different 
hardware resources, with the goal of maximizing overall GPU goodput.

\textbf{Distributed Systems with RDMA}

Numerous research efforts~\cite{wang2022case, kalia2014using, ma2022survey, poke2015dare, cai2018efficient, luo2024splitft, taleb2018tailwind, zhang2022ford} 
have adopted RDMA as the underlying communication protocol in distributed systems. 
These systems typically rely on primary-based protocols to address distributed locking challenges. 
In contrast, \sysname{} does not depend on any primary node for message delivery and instead employs a distributed locking mechanism to manage coordination.

\textbf{Multiple-Producer Ring Buffer}

Multiple-producer ring buffers have been extensively studied~\cite{feldman2015wait, lee2009lock, barrington2015scalable}. 
Although many existing solutions claim to be wait-free, they assume fixed-size buffer slots. 
\sysname{} relaxes this constraint by employing a lock-based mechanism for producers—augmented with deadlock detection—while still 
ensuring a wait-free experience for the consumer.

Systems such as Chubby~\cite{burrows2006chubby} and Redis~\cite{carlson2013redis} offer solutions for deadlock resolution in distributed locking, 
but these typically involve CPU-based coordination. 
\sysname{} introduces a novel approach that resolves deadlocks efficiently within the constraints of RDMA protocols.

\vspace{-2mm}
\subsection{Conclusion}

\sysname{} is a large-scale distributed inference system designed for AIGC workflows. 
It disaggregates workflow stages across multiple instances to improve GPU utilization and employs RDMA technology 
to reduce communication latency caused by large payload transfers between nodes. 
Additionally, \sysname{} introduces a novel solution to resolve deadlocks in shared ring buffers. 
Experimental results demonstrate that the system achieves a 16\% reduction in GPU resource consumption for Wan2.1 image-to-video generation tasks.

\bibliographystyle{acmart-primary/ACM-Reference-Format}
\bibliography{references}

@article{wan2.1,
    title   = {Wan: Open and Advanced Large-Scale Video Generative Models},
    author  = {Wan Team},
    journal = {},
    year    = {2025}
}

@misc{antgroup,
    title   = {Simplifying AI Inference in Production with NVIDIA Triton},
    url    = {https://developer.nvidia.com/blog/simplifying-ai-inference-in-production-with-triton/}
}

@inproceedings{jiang2004high,
  title={High performance MPI-2 one-sided communication over InfiniBand},
  author={Jiang, Weihang and Liu, Jiuxing and Jin, Hyun-Wook and Panda, Dhabaleswar K and Gropp, William and Thakur, Rajeev},
  booktitle={IEEE International Symposium on Cluster Computing and the Grid, 2004. CCGrid 2004.},
  pages={531--538},
  year={2004},
  organization={IEEE}
}

@article{bedeir2010building,
  title={Building an RDMA-capable application with IB Verbs},
  author={Bedeir, Tarick},
  journal={Technical report, HPC Advisory Council},
  year={2010}
}

@inproceedings{stuedi2012wimpy,
  title={Wimpy Nodes with $\{$10GbE$\}$: Leveraging $\{$One-Sided$\}$ Operations in $\{$Soft-RDMA$\}$ to Boost Memcached},
  author={Stuedi, Patrick and Trivedi, Animesh and Metzler, Bernard},
  booktitle={2012 USENIX Annual Technical Conference (USENIX ATC 12)},
  pages={347--353},
  year={2012}
}

@article{sun2025one,
  title={One-Step Diffusion for Detail-Rich and Temporally Consistent Video Super-Resolution},
  author={Sun, Yujing and Sun, Lingchen and Liu, Shuaizheng and Wu, Rongyuan and Zhang, Zhengqiang and Zhang, Lei},
  journal={arXiv preprint arXiv:2506.15591},
  year={2025}
}

@misc{qin2024mooncakekvcachecentricdisaggregatedarchitecture,
      title={Mooncake: A KVCache-centric Disaggregated Architecture for LLM Serving}, 
      author={Ruoyu Qin and Zheming Li and Weiran He and Mingxing Zhang and Yongwei Wu and Weimin Zheng and Xinran Xu},
      year={2024},
      eprint={2407.00079},
      archivePrefix={arXiv},
      primaryClass={cs.DC},
      url={https://arxiv.org/abs/2407.00079}, 
}

@misc{zhao2025insightsdeepseekv3scalingchallenges,
      title={Insights into DeepSeek-V3: Scaling Challenges and Reflections on Hardware for AI Architectures}, 
      author={Chenggang Zhao and Chengqi Deng and Chong Ruan and Damai Dai and Huazuo Gao and Jiashi Li and Liyue Zhang and Panpan Huang and Shangyan Zhou and Shirong Ma and Wenfeng Liang and Ying He and Yuqing Wang and Yuxuan Liu and Y. X. Wei},
      year={2025},
      eprint={2505.09343},
      archivePrefix={arXiv},
      primaryClass={cs.DC},
      url={https://arxiv.org/abs/2505.09343}, 
}

@inproceedings{harris2002practical,
  title={A practical multi-word compare-and-swap operation},
  author={Harris, Timothy L and Fraser, Keir and Pratt, Ian A},
  booktitle={International Symposium on Distributed Computing},
  pages={265--279},
  year={2002},
  organization={Springer}
}

@inproceedings{dragojevic2014farm,
  title={$\{$FaRM$\}$: Fast remote memory},
  author={Dragojevi{\'c}, Aleksandar and Narayanan, Dushyanth and Castro, Miguel and Hodson, Orion},
  booktitle={11th USENIX Symposium on Networked Systems Design and Implementation (NSDI 14)},
  pages={401--414},
  year={2014}
}

@inproceedings{tsigas2001simple,
  title={A simple, fast and scalable non-blocking concurrent fifo queue for shared memory multiprocessor systems},
  author={Tsigas, Philippas and Zhang, Yi},
  booktitle={Proceedings of the thirteenth annual ACM symposium on Parallel algorithms and architectures},
  pages={134--143},
  year={2001}
}

@article{kriz2013lock,
  title={Lock-free multi-producer multi-consumer queue on ring buffer},
  author={Kriz, Alexander},
  journal={Linux Journal, LLC},
  volume={10},
  year={2013}
}

@article{knapp1987deadlock,
  title={Deadlock detection in distributed databases},
  author={Knapp, Edgar},
  journal={ACM Computing Surveys (CSUR)},
  volume={19},
  number={4},
  pages={303--328},
  year={1987},
  publisher={ACM New York, NY, USA}
}

@article{holt1972some,
  title={Some deadlock properties of computer systems},
  author={Holt, Richard C},
  journal={ACM Computing Surveys (CSUR)},
  volume={4},
  number={3},
  pages={179--196},
  year={1972},
  publisher={ACM New York, NY, USA}
}

@article{chandy1983distributed,
  title={Distributed deadlock detection},
  author={Chandy, K Mani and Misra, Jayadev and Haas, Laura M},
  journal={ACM Transactions on Computer Systems (TOCS)},
  volume={1},
  number={2},
  pages={144--156},
  year={1983},
  publisher={ACM New York, NY, USA}
}

@article{singhal2002deadlock,
  title={Deadlock detection in distributed systems},
  author={Singhal, Mukesh},
  journal={Computer},
  volume={22},
  number={11},
  pages={37--48},
  year={2002},
  publisher={IEEE}
}

@book{jin2001high,
  title={High performance mass storage and parallel I/O: technologies and applications},
  author={Jin, Hai},
  year={2001},
  publisher={John Wiley \& Sons, Inc.}
}

@inproceedings{kalia2016design,
  title={Design guidelines for high performance $\{$RDMA$\}$ systems},
  author={Kalia, Anuj and Kaminsky, Michael and Andersen, David G},
  booktitle={2016 USENIX annual technical conference (USENIX ATC 16)},
  pages={437--450},
  year={2016}
}

@inproceedings{liu2003high,
  title={High performance RDMA-based MPI implementation over InfiniBand},
  author={Liu, Jiuxing and Wu, Jiesheng and Kini, Sushmitha P and Wyckoff, Pete and Panda, Dhabaleswar K},
  booktitle={Proceedings of the 17th annual international conference on Supercomputing},
  pages={295--304},
  year={2003}
}

@inproceedings{islam2012high,
  title={High performance RDMA-based design of HDFS over InfiniBand},
  author={Islam, Nusrat S and Rahman, Mohammad Wahidur and Jose, Jithin and Rajachandrasekar, Raghunath and Wang, Hao and Subramoni, Hari and Murthy, Chet and Panda, Dhabaleswar K},
  booktitle={SC'12: Proceedings of the International Conference on High Performance Computing, Networking, Storage and Analysis},
  pages={1--12},
  year={2012},
  organization={IEEE}
}

@inproceedings{wang2022tesseract,
  title={Tesseract: Parallelize the tensor parallelism efficiently},
  author={Wang, Boxiang and Xu, Qifan and Bian, Zhengda and You, Yang},
  booktitle={Proceedings of the 51st International Conference on Parallel Processing},
  pages={1--11},
  year={2022}
}

@article{huang2019gpipe,
  title={Gpipe: Efficient training of giant neural networks using pipeline parallelism},
  author={Huang, Yanping and Cheng, Youlong and Bapna, Ankur and Firat, Orhan and Chen, Dehao and Chen, Mia and Lee, HyoukJoong and Ngiam, Jiquan and Le, Quoc V and Wu, Yonghui and others},
  journal={Advances in neural information processing systems},
  volume={32},
  year={2019}
}

@inproceedings{narayanan2019pipedream,
  title={PipeDream: Generalized pipeline parallelism for DNN training},
  author={Narayanan, Deepak and Harlap, Aaron and Phanishayee, Amar and Seshadri, Vivek and Devanur, Nikhil R and Ganger, Gregory R and Gibbons, Phillip B and Zaharia, Matei},
  booktitle={Proceedings of the 27th ACM symposium on operating systems principles},
  pages={1--15},
  year={2019}
}

@misc{nccl,
    title  = {NCCL Library},
    author = {NVIDIA},
    url    = {https://developer.nvidia.com/nccl}
}

@article{salimans2022progressive,
  title={Progressive distillation for fast sampling of diffusion models},
  author={Salimans, Tim and Ho, Jonathan},
  journal={arXiv preprint arXiv:2202.00512},
  year={2022}
}

@article{yue2023resshift,
  title={Resshift: Efficient diffusion model for image super-resolution by residual shifting},
  author={Yue, Zongsheng and Wang, Jianyi and Loy, Chen Change},
  journal={Advances in Neural Information Processing Systems},
  volume={36},
  pages={13294--13307},
  year={2023}
}

@article{luo2023latent,
  title={Latent consistency models: Synthesizing high-resolution images with few-step inference},
  author={Luo, Simian and Tan, Yiqin and Huang, Longbo and Li, Jian and Zhao, Hang},
  journal={arXiv preprint arXiv:2310.04378},
  year={2023}
}

@inproceedings{sauer2024adversarial,
  title={Adversarial diffusion distillation},
  author={Sauer, Axel and Lorenz, Dominik and Blattmann, Andreas and Rombach, Robin},
  booktitle={European Conference on Computer Vision},
  pages={87--103},
  year={2024},
  organization={Springer}
}

@inproceedings{yin2024one,
  title={One-step diffusion with distribution matching distillation},
  author={Yin, Tianwei and Gharbi, Micha{\"e}l and Zhang, Richard and Shechtman, Eli and Durand, Fredo and Freeman, William T and Park, Taesung},
  booktitle={Proceedings of the IEEE/CVF conference on computer vision and pattern recognition},
  pages={6613--6623},
  year={2024}
}

@article{li2023snapfusion,
  title={Snapfusion: Text-to-image diffusion model on mobile devices within two seconds},
  author={Li, Yanyu and Wang, Huan and Jin, Qing and Hu, Ju and Chemerys, Pavlo and Fu, Yun and Wang, Yanzhi and Tulyakov, Sergey and Ren, Jian},
  journal={Advances in Neural Information Processing Systems},
  volume={36},
  pages={20662--20678},
  year={2023}
}

@inproceedings{yang2023diffusion,
  title={Diffusion probabilistic model made slim},
  author={Yang, Xingyi and Zhou, Daquan and Feng, Jiashi and Wang, Xinchao},
  booktitle={Proceedings of the IEEE/CVF Conference on computer vision and pattern recognition},
  pages={22552--22562},
  year={2023}
}

@inproceedings{yu2022orca,
  title={Orca: A distributed serving system for $\{$Transformer-Based$\}$ generative models},
  author={Yu, Gyeong-In and Jeong, Joo Seong and Kim, Geon-Woo and Kim, Soojeong and Chun, Byung-Gon},
  booktitle={16th USENIX Symposium on Operating Systems Design and Implementation (OSDI 22)},
  pages={521--538},
  year={2022}
}

@inproceedings{sheng2023flexgen,
  title={Flexgen: High-throughput generative inference of large language models with a single gpu},
  author={Sheng, Ying and Zheng, Lianmin and Yuan, Binhang and Li, Zhuohan and Ryabinin, Max and Chen, Beidi and Liang, Percy and R{\'e}, Christopher and Stoica, Ion and Zhang, Ce},
  booktitle={International Conference on Machine Learning},
  pages={31094--31116},
  year={2023},
  organization={PMLR}
}

@inproceedings{agrawal2024taming,
  title={Taming $\{$Throughput-Latency$\}$ tradeoff in $\{$LLM$\}$ inference with $\{$Sarathi-Serve$\}$},
  author={Agrawal, Amey and Kedia, Nitin and Panwar, Ashish and Mohan, Jayashree and Kwatra, Nipun and Gulavani, Bhargav and Tumanov, Alexey and Ramjee, Ramachandran},
  booktitle={18th USENIX Symposium on Operating Systems Design and Implementation (OSDI 24)},
  pages={117--134},
  year={2024}
}

@article{wu2023fast,
  title={Fast distributed inference serving for large language models},
  author={Wu, Bingyang and Zhong, Yinmin and Zhang, Zili and Liu, Shengyu and Liu, Fangyue and Sun, Yuanhang and Huang, Gang and Liu, Xuanzhe and Jin, Xin},
  journal={arXiv preprint arXiv:2305.05920},
  year={2023}
}

@article{qin2024mooncake,
  title={Mooncake: A kvcache-centric disaggregated architecture for llm serving},
  author={Qin, Ruoyu and Li, Zheming and He, Weiran and Zhang, Mingxing and Wu, Yongwei and Zheng, Weimin and Xu, Xinran},
  journal={arXiv preprint arXiv:2407.00079},
  year={2024}
}

@inproceedings{patel2024splitwise,
  title={Splitwise: Efficient generative llm inference using phase splitting},
  author={Patel, Pratyush and Choukse, Esha and Zhang, Chaojie and Shah, Aashaka and Goiri, {\'I}{\~n}igo and Maleki, Saeed and Bianchini, Ricardo},
  booktitle={2024 ACM/IEEE 51st Annual International Symposium on Computer Architecture (ISCA)},
  pages={118--132},
  year={2024},
  organization={IEEE}
}

@article{wang2022case,
  title={The case for distributed shared-memory databases with RDMA-enabled memory disaggregation},
  author={Wang, Ruihong and Wang, Jianguo and Idreos, Stratos and {\"O}zsu, M Tamer and Aref, Walid G},
  journal={arXiv preprint arXiv:2207.03027},
  year={2022}
}

@inproceedings{kalia2014using,
  title={Using RDMA efficiently for key-value services},
  author={Kalia, Anuj and Kaminsky, Michael and Andersen, David G},
  booktitle={Proceedings of the 2014 ACM Conference on SIGCOMM},
  pages={295--306},
  year={2014}
}

@article{ma2022survey,
  title={A survey of storage systems in the RDMA era},
  author={Ma, Shaonan and Ma, Teng and Chen, Kang and Wu, Yongwei},
  journal={IEEE Transactions on Parallel and Distributed Systems},
  volume={33},
  number={12},
  pages={4395--4409},
  year={2022},
  publisher={IEEE}
}

@inproceedings{poke2015dare,
  title={Dare: High-performance state machine replication on rdma networks},
  author={Poke, Marius and Hoefler, Torsten},
  booktitle={Proceedings of the 24th International Symposium on High-Performance Parallel and Distributed Computing},
  pages={107--118},
  year={2015}
}

@article{cai2018efficient,
  title={Efficient distributed memory management with RDMA and caching},
  author={Cai, Qingchao and Guo, Wentian and Zhang, Hao and Agrawal, Divyakant and Chen, Gang and Ooi, Beng Chin and Tan, Kian-Lee and Teo, Yong Meng and Wang, Sheng},
  journal={Proceedings of the VLDB Endowment},
  volume={11},
  number={11},
  pages={1604--1617},
  year={2018},
  publisher={VLDB Endowment}
}

@inproceedings{luo2024splitft,
  title={SplitFT: Fault Tolerance for Disaggregated Datacenters via Remote Memory Logging},
  author={Luo, Xuhao and Alagappan, Ramnatthan and Ganesan, Aishwarya},
  booktitle={Proceedings of the Nineteenth European Conference on Computer Systems},
  pages={590--607},
  year={2024}
}

@inproceedings{taleb2018tailwind,
  title={Tailwind: Fast and Atomic $\{$RDMA-based$\}$ Replication},
  author={Taleb, Yacine and Stutsman, Ryan and Antoniu, Gabriel and Cortes, Toni},
  booktitle={2018 USENIX Annual Technical Conference (USENIX ATC 18)},
  pages={851--863},
  year={2018}
}

@inproceedings{zhang2022ford,
  title={$\{$FORD$\}$: Fast one-sided $\{$RDMA-based$\}$ distributed transactions for disaggregated persistent memory},
  author={Zhang, Ming and Hua, Yu and Zuo, Pengfei and Liu, Lurong},
  booktitle={20th USENIX Conference on File and Storage Technologies (FAST 22)},
  pages={51--68},
  year={2022}
}

@inproceedings{burrows2006chubby,
  title={The Chubby lock service for loosely-coupled distributed systems},
  author={Burrows, Mike},
  booktitle={Proceedings of the 7th symposium on Operating systems design and implementation},
  pages={335--350},
  year={2006}
}

@article{feldman2015wait,
  title={A wait-free multi-producer multi-consumer ring buffer},
  author={Feldman, Steven and Dechev, Damian},
  journal={ACM SIGAPP Applied Computing Review},
  volume={15},
  number={3},
  pages={59--71},
  year={2015},
  publisher={ACM New York, NY, USA}
}

@inproceedings{lee2009lock,
  title={A lock-free, cache-efficient shared ring buffer for multi-core architectures},
  author={Lee, Patrick PC and Bu, Tian and Chandranmenon, Girish},
  booktitle={Proceedings of the 5th ACM/IEEE Symposium on Architectures for Networking and Communications Systems},
  pages={78--79},
  year={2009}
}

@inproceedings{barrington2015scalable,
  title={A scalable multi-producer multi-consumer wait-free ring buffer},
  author={Barrington, Andrew and Feldman, Steven and Dechev, Damian},
  booktitle={Proceedings of the 30th Annual ACM Symposium on Applied Computing},
  pages={1321--1328},
  year={2015}
}

@article{hu2022lora,
  title={Lora: Low-rank adaptation of large language models.},
  author={Hu, Edward J and Shen, Yelong and Wallis, Phillip and Allen-Zhu, Zeyuan and Li, Yuanzhi and Wang, Shean and Wang, Lu and Chen, Weizhu and others},
  journal={ICLR},
  volume={1},
  number={2},
  pages={3},
  year={2022}
}

@INPROCEEDINGS{8474715,
  author={Devalal, Shilpa and Karthikeyan, A.},
  booktitle={2018 Second International Conference on Electronics, Communication and Aerospace Technology (ICECA)}, 
  title={LoRa Technology - An Overview}, 
  year={2018},
  volume={},
  number={},
  pages={284-290},
  keywords={Batteries;Logic gates;Wireless fidelity;Chirp;Conferences;Embedded systems;IoT;LoRa;LoRaWAN;LoRa Alliance;Embedded Systems},
  doi={10.1109/ICECA.2018.8474715}}

@article{dai2019diagnosing,
  title={Diagnosing and enhancing VAE models},
  author={Dai, Bin and Wipf, David},
  journal={arXiv preprint arXiv:1903.05789},
  year={2019}
}

@article{doersch2016tutorial,
  title={Tutorial on variational autoencoders},
  author={Doersch, Carl},
  journal={arXiv preprint arXiv:1606.05908},
  year={2016}
}

@article{wijmans1995solution,
  title={The solution-diffusion model: a review},
  author={Wijmans, Johannes G and Baker, Richard W},
  journal={Journal of membrane science},
  volume={107},
  number={1-2},
  pages={1--21},
  year={1995},
  publisher={Elsevier}
}

@article{croitoru2023diffusion,
  title={Diffusion models in vision: A survey},
  author={Croitoru, Florinel-Alin and Hondru, Vlad and Ionescu, Radu Tudor and Shah, Mubarak},
  journal={IEEE transactions on pattern analysis and machine intelligence},
  volume={45},
  number={9},
  pages={10850--10869},
  year={2023},
  publisher={Ieee}
}

@article{ni2021sentence,
  title={Sentence-t5: Scalable sentence encoders from pre-trained text-to-text models},
  author={Ni, Jianmo and Abrego, Gustavo Hernandez and Constant, Noah and Ma, Ji and Hall, Keith B and Cer, Daniel and Yang, Yinfei},
  journal={arXiv preprint arXiv:2108.08877},
  year={2021}
}

@article{kung1981optimistic,
  title={On optimistic methods for concurrency control},
  author={Kung, Hsiang-Tsung and Robinson, John T},
  journal={ACM Transactions on Database Systems (TODS)},
  volume={6},
  number={2},
  pages={213--226},
  year={1981},
  publisher={ACM New York, NY, USA}
}

@inproceedings{soisalon1995partial,
  title={Partial strictness in two-phase locking},
  author={Soisalon-Soininen, Eljas and Yl{\"o}nen, Tatu},
  booktitle={International Conference on Database Theory},
  pages={139--147},
  year={1995},
  organization={Springer}
}

@article{paxos,
    author    = {Lamport, Leslie},
    title     = {Paxos Made Simple},
    journal   = {ACM SIGACT News},
    note      = {Distributed Computing Column 5},
    volume    = {32},
    number    = {4},
    year      = {2001},
    pages     = {51--58},
    publisher = {ACM},
    doi       = {10.1145/568425.568433}
}

@book{carlson2013redis,
  title={Redis in action},
  author={Carlson, Josiah},
  year={2013},
  publisher={Simon and Schuster}
}

\end{document}